\newcommand{\blind}{0}
\newcommand {\ctn}{\citet}
\renewcommand{\t}{\ensuremath{\theta}}
\renewcommand{\a}{\ensuremath{\alpha}}
\renewcommand{\b}{\ensuremath{\beta}}
\newcommand{\e}{\ensuremath{\epsilon}}
\newcommand{\from}{\ensuremath{\leftarrow}}
\newcommand{\bm}{\mathbf}
\newcommand {\bmu}{\mbox{\boldmath $\mu$}}
\newcommand {\bSigma}{\mbox{\boldmath $\Sigma$}}
\newcommand{\dint}{\int\displaylimits}
\newcommand{\bC}{\mathbf C}
\newcommand{\bp}{\mathbf p}
\newcommand{\bq}{\mathbf q}
\newcommand{\bx}{\mathbf x}
\newcommand{\by}{\mathbf y}
\newcommand{\bz}{\mathbf z}
\newcommand{\bw}{\mathbf w}
\newcommand{\be}{\pmb\e}
\newcommand{\topline}{\hrule height 1pt width \textwidth \vspace*{2pt}}
\newcommand{\botline}{\vspace*{2pt}\hrule height 1pt width \textwidth \vspace*{4pt}}
\newtheorem{algo}{Algorithm} 
\numberwithin{equation}{section}
\numberwithin{algo}{section}
\numberwithin{table}{section}
\numberwithin{figure}{section}
\newcommand{\statesp}{\ensuremath{\mathcal X}}
\newcommand{\Y}{\ensuremath{\mathcal Y}}
\newcommand{\D}{\ensuremath{\mathcal D}}
\newcommand{\R}{\ensuremath{\mathbb R}}
\newcommand{\supr}[2]{{#1}^{(#2)}}
\newcolumntype{Z}{>{\centering\arraybackslash}X}%
\date{}
\begin{document}

\bibliographystyle{natbib}

\def\spacingset#1{\renewcommand{\baselinestretch}%
{#1}\small\normalsize} \spacingset{1}


\if0\blind
{
  \title{\bf Markov Chain Monte Carlo Based on Deterministic Transformations}
   \author{Somak Dutta\thanks{Corresponding e-mail: sdutta@galton.uchicago.edu}\\
    Department of Statistics\\
    University of Chicago\\
    and\\
   Sourabh Bhattacharya\\
   Bayesian and Interdisciplinary Research Unit\\
  Indian Statistical Institute
 }
\maketitle
 } \fi

\if1\blind
{
 \bigskip
 \bigskip
 \bigskip
 \begin{center}
 {\LARGE\bf Markov Chain Monte Carlo Based on Deterministic Transformations}
 \end{center}

  \medskip
  } \fi

  \bigskip

\begin{abstract}
In this article we propose a novel MCMC method based on deterministic 
transformations $T$ : $\statesp$$\times$$\D$ $\to$ $\statesp$ 
where $\statesp$ is the state-space and $\D$ is some set which may or may not be a subset of $\statesp$. 
We refer to our new methodology as
Transformation-based Markov chain Monte Carlo (TMCMC). 
One of the remarkable advantages of our proposal is that even if the underlying target
distribution is very high-dimensional, deterministic transformation of a one-dimensional
random variable is sufficient to generate an appropriate Markov chain that is guaranteed
to converge to the high-dimensional target distribution. Apart from clearly leading to massive
computational savings, this idea of deterministically transforming a single random variable
very generally leads to excellent acceptance rates, even though all the random variables associated
with the high-dimensional target distribution are updated in
a single block. Since it is well-known that joint updating of many random variables using
Metropolis-Hastings (MH) algorithm generally leads to poor acceptance rates, TMCMC, in this regard,
seems to provide a significant advance. 
We validate our proposal theoretically, establishing the convergence properties. 
Furthermore, we show
that TMCMC can be very effectively adopted for simulating from doubly intractable distributions.

We show that TMCMC includes hybrid Monte Carlo (HMC) as a special case. We also contrast TMCMC with
the generalized Gibbs and Metropolis methods of \ctn{Liu99}, \ctn{Liu00} and \ctn{Kou05},
pointing out that even though the latter also use transformations, their goal is to seek improvement 
of the standard Gibbs and Metropolis Hastings
algorithms by adding a transformation-based step, while TMCMC is an altogether new and general 
methodology for simulating from  
intractable, particularly, high-dimensional distributions.

TMCMC is compared with MH using the well-known Challenger data, demonstrating the effectiveness of 
of the former in the case of highly correlated variables. Moreover, we apply our methodology to a 
challenging posterior simulation problem associated with the geostatistical model of \ctn{Diggle98}, 
updating 160 unknown parameters
jointly, using a deterministic transformation of a one-dimensional random variable. Remarkable
computational savings as well as good convergence properties and acceptance rates are the results. 

\end{abstract}
\noindent%
{\it Keywords:}
Geostatistics; High dimension; Inverse transformation; Jacobian; Metropolis-Hastings algorithm; Mixture proposal

\spacingset{1.45}

\section{Introduction}

Markov chain Monte Carlo (MCMC) has revolutionized statistical, particularly, Bayesian
computation. In the Bayesian paradigm, however complicated the posterior distribution may be, 
it is always possible, in principle, to obtain as many (dependent) samples from the posterior
as desired, to make inferences about posterior characteristics. But in spite of the obvious 
success story enjoyed by the theoretical side of MCMC, satisfactory practical implementation 
of MCMC often encounters severe challenges, particularly in very high-dimensional problems.
These challenges may arise in the form of the requirement of enormous computational effort, often 
requiring inversions of very high-dimensional matrices, implying the requirement
of enormous computation time, even for a single iteration. Given that such high-dimensional
problems typically converge extremely slowly to the target distribution triggered by complicated 
posterior dependence structures between the unknown parameters, astronomically large number
of iterations (of the order of millions) are usually necessary. This, coupled with the computational
expense of individual iterations, generally makes satisfactory implementation of MCMC,
and hence, satisfactory Bayesian inference, infeasible.  That this is the situation despite steady
technological advancement, is somewhat disconcerting.

\subsection{Overview of the contributions of this paper}
\label{subsec:contributions}
In an attempt to overcome the problems mentioned above, 
in this paper we propose a novel methodology that can jointly update all the unknown parameters
without compromising the acceptance rate, unlike in Metropolis-Hastings (MH) algorithm. In fact,
we show that even though a very large number of parameters are to be updated, these can be
updated by simple deterministic transformations of a single, one-dimensional random variable,
the distribution of which can be chosen very flexibly. As can be already anticipated from this brief description,
indeed, this yields an extremely fast simulation algorithm, thanks to the singleton random variable
to be flexibly simulated, and the subsequent simple deterministic transformation, for example,
additive transformation. It is also possible, maybe more efficient sometimes, to generate more
than one, rather than a single, random variables, from a flexible multivariate (generally independent), but 
low-dimensional distribution. We refer to our new methodology as Transformation-based MCMC (TMCMC). 

We show that by generating as many random variables
as the number of parameters, instead of a single/few random variables, TMCMC can be reduced
to a MH algorithm with a specialized proposal distribution. 
Another popular MCMC methodology, the hybrid Monte Carlo (HMC) method, which relies upon
a specialized deterministic transformation, will be shown to be a special case of TMCMC. 

We also provide a brief overview of the transformation-based generalized Gibbs and Metropolis
methods of \ctn{Liu99}, \ctn{Liu00} and \ctn{Kou05}, and point out their differences with TMCMC,
also arguing that TMCMC can be far more efficient at least in terms of computational gains.

Apart from illustrating TMCMC on the well-known Challenger data set, and demonstrating its superiority
over existing MH methods, we successfully apply TMCMC with the mere simulation of a single random variable,
to update 160 unknown parameters in every iteration, in the challenging geospatial problem of
\ctn{Diggle98}. The computational challenges involved with this and similar
geospatial problems have motivated varieties
of MCMC algorithms and deterministic approximations to the posterior in the literature 
(see, {\em e. g.} \ctn{Rue09}, \ctn{Chris06}
and the references therein). With our TMCMC algorithm we have been able to perform $5.5\times 10^7$
iterations (in a few days) and obtain reasonable convergence. 

We also show how TMCMC can be adopted to significantly improve computational efficiency in doubly intractable
problems, where the posterior, apart from being intractable, also involves the normalizing constant
of the likelihood---the crucial point being that the normalizing constant, which depends upon unknown parameters,
is also intractable.

The rest of this article is structured as follows. 
In Section \ref{sec:gtm} we introduce our new TMCMC method based on transformations. The univariate and the 
multivariate cases are considered separately in Sections \ref{sec:gtm:univar} and \ref{sec:gtm:multivar} 
respectively. In Section \ref{sec:single_e} we study in details the role and efficiency of a singleton $\e$ 
in updating high-dimensional Markov chains using TMCMC.
Illustration of TMCMC with singleton $\e$ using the Challenger data and comparison with a popular MCMC technique are provided in Section \ref{sec:appl:logistic}.
Application of TMCMC with single $\e$ to the 160-dimensional geospatial problem of \ctn{Diggle98} is detailed in  Section \ref{sec:appl:dtm}.
Section \ref{sec:doublyintract} shows how TMCMC may be applied to the bridge-exchange algorithm of \cite{Murray06} in doubly intractable problems to speed-up computation. 
Finally, conclusions and overview of future work are provided in Section \ref{sec:conclusions}.

Further investigations and additional details are provided in the supplement \ctn{Dutta13b}, 
whose sections, figures and tables have the prefix ``S-" when referred to in this paper. 
The contents of the supplement are as follows.
Section S-1 contains the proof of detailed balance for TMCMC, Section S-2 provides the general TMCMC algorithm for a one-dimensional proposal,
while Section S-3 contains details on convergence properties of additive TMCMC.
In Section S-4 we provide a more structured version of the general TMCMC algorithm of Section S-2, 
proving detailed balance
of this algorithm in Section S-5.
Detailed investigation of acceptance rate of additive TMCMC and comparison with that of random walk MH (RWMH) is 
carried out in Section S-6. In Sections S-7
and S-8 respectively, comparisons of TMCMC with HMC and generalized Gibbs/Metropolis methods of \ctn{Liu99}, \ctn{Liu00}, and \ctn{Kou05}
are provided. Examples of TMCMC for discrete state spaces are provided
in Section S-9.

\section{MCMC algorithms based on transformations on the state--space} \label{sec:gtm}
In this section we propose and study the TMCMC algorithms. First, we construct it for state-spaces of dimension one. 
This case is not of much interest because the state space is similar to the real line and numerical integration is quite efficient in this scenario. 
Nevertheless, construction of the TMCMC algorithm for one dimensional problems helps to generalize it to higher dimensions and points out its connections 
(similarities in one-dimension and dissimilarities in higher dimensions) with the MH algorithm. In Sections \ref{sec:gtm:multivar} 
and \ref{sec:single_e} the TMCMC algorithm is 
generalized to higher dimensional state-spaces, the latter section considering the utility of single $\e$ in high dimensions.

\subsection{Univariate case}\label{sec:gtm:univar}

Before providing the formal theory we first provide an informal discussion of our ideas with 
a simple example involving the additive transformation.

\subsubsection{Informal discussion}

In order to obtain a valid algorithm based on transformations, we need to design appropriate ``move types"
so that detailed balance and irreducibility hold. Given that we are in the current state $x$, 
we can propose the ``forward move" $x'=x+\e$; here $\e>0$ is a simulation from some arbitrary density of the
form $g(\e)I_{(0,\infty)}(\e)$. To move back to $x$ from $x'$, we need to apply the ``backward transformation"
$x'-\e$. In general, given $\e$ and the current state $x$, we shall denote the forward transformation by 
$T(x,\e)$, and the backward transformation by $T^b(x,\e)$. 

The forward and the backward transformations need to be 1-to-1. In other words, for any fixed $\e$,
given $x'$ the backward transformation must be such that $x$ can be retrieved uniquely.
Since this must hold good for every $x$ in the state space, the transformation must be onto as well.
Similarly, for any fixed $\e$, there must exist $x$ such that the forward 
transformation leads to arbitrarily chosen $x'$ in the state space uniquely, implying that this 
transformation is also 1-to-1 and onto.
If, given $\e$ and $x'$, say, more than one solution exist, then return to the current value $x$   
can not be ensured, and this makes detailed balance, a requirement for stationarity of the 
underlying Markov chain, hard to satisfy.

The detailed balance requirement also demands that, given $x$, the regions covered by the forward and
the backward transformations are disjoint. 
For example, in our additive transformation case, the forward transformation always takes 
$x$ to some unique $x'$, where $x'>x$. To return from $x'$ to $x$, it is imperative that
the backward transformation decreases the value of $x'$ to give back $x$. Thus, if the
forward transformation always increases the current value $x$, the backward transformation
must always decrease $x$. In other words, the regions covered by the two transformations are disjoint.  
Since $x$ is led to $x'$ by the forward transformation and $x'$ is taken back to $x$
by the backward transformation, we must have $T(T^b(x,\e),\e)=x$. Also, the sequence of 
forward and backward transformations can be changed to achieve the same effect, that is, we must also
have $T^b(T(x,\e),\e)=x$.
In the above discussion we indicated the use the same $\e$ for updating $x$ to $x'$ and 
for moving back from $x'$ to $x$. An important advantage associated with this strategy is that
whatever the choice of the density $g(\e)I_{(0,\infty)}(\e)$, it will cancel in the acceptance ratio of our 
TMCMC algorithm, resulting in a welcome simplification.

Thanks to bijection each of the forward and the backward transformations will be equipped with their respective
inverses. In general, we denote by $T(x,\e)$ and $T^b(x,\e)$ the forward and the backward transformations,
and by $T^{-1}(x,\e)$ and ${T^b}^{-1}(x,\e)$ their respective inverses. Note that for fixed $\e$,
$T^{-1}(x,\e)=T^b(x,\e)$, and  ${T^b}^{-1}(x,\e)=T(x,\e)$, but the general inverses
must be defined by eliminating $\e$. For instance, substituting
$\e=x'-x$ for the forward transformation yields $T(x,\e)=T(x,x'-x)=x+(x'-x)=x'$.
Defining $T^{-1}(x,x')=x'-x$, it then follows that $T(x,T^{-1}(x,x'))$$=x'$$=$$T^{-1}(x,T(x,x'))$,
showing that $T^{-1}$ is the inverse of $T$ in the above sense. Similarly, ${T^b}^{-1}$ can also
be defined.

\subsubsection{Formal set-up}

Suppose $T : \statesp \times \mathcal D \to \statesp$ for some $\mathcal D$ (possibly a subset of \statesp) 
is a totally differentiable transformation such that 
\begin{enumerate}
 \item for every fixed $\e \notin \mathcal N_1$, the transform $x \longmapsto T(x,\e)$ is bijective and 
 differentiable and that the inverse is also differentiable.
 \item for every fixed $x \notin \mathcal N_2$, the transform $\e \longmapsto T(x,\e)$ is injective.
\end{enumerate}
where $\mathcal N_1$ and $\mathcal N_2$ are $\pi$-negligible sets. Further suppose that the Jacobian
 \[ J(x,\e) \quad = \quad \left| \dfrac{\partial (T(x,\e),\e)}{\partial(x,\e)} \right|\]
is non-zero almost everywhere.

Suppose there is a subset $\Y$ of $\D$ such that  $\forall x \notin \mathcal N_2$ the 
sets $T(x,\Y)$ and $T^b(x,\Y)$ are disjoint, where $T^b(x,\e)$ is the backward transformation defined by:
\[ T\left( T^b(x,\e),\e\right) = T^b\left( T(x,\e),\e\right) = x\]

\textbf{Example: Transformations on One dimensional state--space}
\begin{enumerate}
 \item(additive transformation) Suppose $\statesp = \D = \mathbb R$ and $T(x,\e) = x + \e$.  
 Let $T^b(x,\e) = x - \e$.  
 This transformation is basically  the random walk if $\e$ is a random quantity. Notice that if we may 
 choose $\Y = (0,\infty)$, then $T(x,\Y) = (x,\infty)$, $T^b(x,\Y) = (-\infty,x)$ and we can characterize 
 the transformation as a forward move or a backward move according as $\e \in$ or $\notin \Y$. Notice that 
 here $\mathcal N$ is the empty set and for all $\e \in \D$ the map $x \longmapsto T(x,\e)$ is a bijection.

\item(log--additive transformation) Suppose $\statesp = \D = (0,\infty)$ and $T(x,\e) = x\e$. 
For all $x \in \statesp$, $T^b(x,\e) = x/\e$; $\Y = (0,1)$.

\item(multiplicative transformation) Let $\statesp = \mathbb R = \D$, $T(x,\e) = x\e$. 
Then $\mathcal N_1 = \mathcal N_2 = \{0\}$, 
for all $\e \ne 0$, $T^b(x,\e) = x/\e$; $\Y = (-1,1) - \{0\}$.

\end{enumerate}

Suppose further that $g$ is a  density on $\Y$ and that $0 < p < 1$. Then the MCMC algorithm based 
on transformation is given in Algorithm \ref{algo:GTM:univar}
\vspace{2cm}
\begin{algo}\label{algo:GTM:univar} \topline MCMC algorithm based on transformation 
(univariate case) \botline \normalfont \ttfamily
\begin{itemize}
 \item Input: Initial value $x_0$, and number of iterations $N$. 
 \item For $t=0,\ldots,N-1$
\begin{enumerate}
 \item Generate $\e \sim g(\cdot)$ and $u \sim$ U$(0,1)$ independently
 \item If $0 < u < p$, set 
\[ x' = T(x_t,\e) \quad \textrm{ and } \quad \alpha(x_t,\e) 
= \min\left(1, \dfrac{1-p}{p} ~\dfrac{\pi(x')}{\pi(x_t)} ~J(x,\e) \right)\]
\item Else if $p < u < 1$ set
\[ x' = T^b(x_t,\e) \quad \textrm{ and } \quad \alpha(x_t,\e) 
= \min\left(1, \dfrac{p}{1-p} ~\dfrac{\pi(x')}{\pi(x_t)} ~\dfrac1{J(x,\e)} \right)\]
\item Set \[x_{t+1} = \left\{\begin{array}{ccc}
 x' & \textsf{ with probability } & \a(x_t,\e) \\
 x_{t}& \textsf{ with probability } & 1 - \a(x_t,\e)
\end{array}\right.\]
\end{enumerate}
\item End for
\end{itemize}
\botline \rmfamily
\end{algo}
Notably, the acceptance probability is independent of the distribution $g(\cdot)$, even if it is not symmetric.
The algorithm can be shown to be a special case of MH algorithm with the mixture proposal density:
\begin{equation}\label{eqn:mixturepro_equiv}
\begin{split}
 q(x \to z) & = p~g(T^{-1}(x,z))\left|\dfrac{\partial T^{-1}(x,z)}{\partial z}\right|\mathbb I(z \in T(x,\Y)) \\
& \qquad + (1-p)~g({T^b}^{-1}(x,z))\left|\dfrac{\partial {T^b}^{-1}(x,z)}{\partial z}\right|\mathbb I(z \in T^b(x,\Y))
\end{split}
\end{equation}
where the \emph{inverses} are defined by
\begin{enumerate}
\item $T(x,T^{-1}(x,z)) = z = T^{-1}(x,T(x,z)),~\forall~ z\in T(x,\Y)$
\item $T^b(x,{T^b}^{-1}(x,z)) = z = {T^b}^{-1}(x,T^b(x,z)),~\forall~ z\in T^b(x,\Y)$
 \end{enumerate}
In Section S-1 we show that detailed balance holds for the above algorithm. This ensures
that our TMCMC methodology has $\pi$ as the stationary distribution. 
Although in this univariate case TMCMC is an MH algorithm with the 
specialized mixture density (\ref{eqn:mixturepro_equiv}) as the proposal mechanism,
this proposal distribution becomes singular in general in higher dimensions.


We remark that TMCMC maybe particularly useful for improving
the mixing properties of the Markov chain. For instance, if there are distinct modes in 
several disjoint regions of state space, then standard MH algorithms tend to get trapped
in some modal regions, particularly if the proposal distribution has small variance.
Higher variance, on the other hand, may lead to poor acceptance rates in standard MH algorithms.
Gibbs sampling is perhaps more prone to mixing problems due to the lack of tuning facilities.
For multimodal target distributions, mixture proposal densities are often recommended. For
instance, \ctn{Guan07} theoretically prove that a mixture of two
proposal densities results in a ``rapidly mixing" Markov chain when the target distribution is multimodal. 
Our proposal, which we have shown to be a mixture density in the one-dimensional case,
seems to be appropriate from this perspective. 
Indeed, in keeping with this discussion, \ctn{Dutta10}, apart from
showing that the multiplicative transformation is geometrically ergodic even in situations
where the standard proposals fail to be so, demonstrated that it is very effective for bimodal
distributions. 
These arguments demonstrate that a real advantage of TMCMC (also of other transformation-based methods
as in \ctn{Liu01}) comes forth when the transformations 
associated with our method 
identify a subspace moving within which allows to explore regions that are 
otherwise separated by valleys in the probability
function. Efficient choice of transformations of course depends upon the target distribution.

In higher dimensions our proposal does not
admit a mixture form but since the principles are similar, it is not unreasonable to expect
good convergence properties of TMCMC in the cases of high-dimensional and/or multimodal target densities.
In the multidimensional case, which makes use of multivariate transformations (which we introduce next),
reasonable acceptance rates can also be ensured, in spite of
the high dimensionality. This  
we show in Section S-6, 
and illustrate with the Challenger data problem
and particularly with the geostatistical problem. 
Moreover, the multivariate transformation method brings out other significant 
advantages of our method, for instance, computational speed and the ability to overcome mixing 
problems caused by highly correlated variables.

\subsection{Multivariate case}\label{sec:gtm:multivar}
Suppose now that $\statesp$ is a $k$-dimensional space of the form $\statesp = \prod_{i=1}^k \statesp_i$ 
so that $T = (T_1,\ldots,T_k)$ where each $T_i : \statesp_i \times \mathcal D \to \statesp_i$, for some 
set $\mathcal D$, are transformations as in Section \ref{sec:gtm:univar}. 
Let $\bz=(z_1,\ldots,z_k)$ be a vector of indicator variables, where, 
for $i=1,\ldots,k$, 
$z_i=1$ and $z_i=-1$ indicate, respectively, application of forward transformation and  
backward transformation to $x_i$.
Given any such indicator vector $\bz$, let us define
$ T_{\bz} = (g_1,g_2,\ldots,g_k)$
where 
\[ g_i = \left\{ \begin{array}{ccc}
                  T_i^b & \textrm{ if } & z_i=-1 \\ 
		  T_i & \textrm{ if } & z_i=1. 
                 \end{array}
\right.\]
Corresponding to any given $\bz$, we also define the following `conjugate' vector
$\bz^c=(z^c_1,z^c_2,\ldots,z^c_k)$, where 
\[ z^c_i = \left\{ \begin{array}{ccc}
                  1 & \textrm{ if } & z_i=-1 \\ 
		  -1 & \textrm{ if } & z_i=1. 
                 \end{array}
\right.\]
With this definition of $\bz^c$, $T_{\bz^c}$ can be interpreted as the conjugate of $T_{\bz}$.

Since $2^k$ values of $\bz$ are possible, it is clear that $T$, via $\bz$,
induces $2^k$ many types of `moves' of the forms $\{T_{\bz_i};i=1,\ldots,2^k\}$ on the state--space. 
Suppose now that there is a subset $\Y$ of $\D$ such that the sets $T_{\bz_i}(\bm x,\Y)$ and 
$T_{\bz_j}(\bm x,\Y)$ are disjoint for every $\bz_i \ne \bz_j$.

\textbf{Examples: Transformations on higher dimensional state--space}
\begin{enumerate}


\item (Additive transformation) Suppose $\statesp = \D=\mathbb R^2$. With two
positive scale parameters $a_1$ and $a_2$, we can then consider the following
additive transformation:
$T_{(1,1)}(\bm x,\be) = (x_1 + a_1\e_1,x_2+a_2\e_2)$, 
$T_{(-1,1)}(\bm x,\be) = (x_1 - a_1\e_1,x_2 + a_2\e_2)$, 
$T_{(1,-1)}(\bm x,\be) = (x_1 + a_1\e_1,x_2 - a_2\e_2)$ and 
$T_{(-1,-1)}(\bm x,\be)= (x_1 - a_1\e_1,x_2 - a_2\e_2)$. 
We may choose $\Y = (0,\infty)\times (0,\infty)$. 

\item (Multiplicative transformation) Suppose $\statesp = \D = \mathbb R \times (0,\infty)$.
Then we may consider the following multiplicative transformation:
$T_{(1,1)}(\bm x,\pmb \e) = (x_1\e_1 , x_2\e_2)$,
$T_{(-1,1)}(\bm x,\pmb \e) = (x_1/\e_1,x_2\e_2)$, 
$T_{(1,-1)}(\bm x,\pmb \e) = (x_1\e_1,x_2/\e_2)$ and 
$T_{(-1,-1)}(\bm x,\pmb\e) = (x_1/\e_1,x_2/\e_2)$. 
We may let $\Y = \left\{(-1,1)-\{0\}\right\} \times (0,1)$.

\item (Additive-multiplicative transformation) Suppose $\statesp = \D = \mathbb R \times (0,\infty)$.
It is possible to combine additive and multiplicative transformations in the following manner:
$T_{(1,1)}(\bm x,\pmb \e) = (x_1 + \e_1 , x_2\e_2)$,
$T_{(-1,1)}(\bm x,\pmb \e) = (x_1 - \e_1,x_2\e_2)$, 
$T_{(1,-1)}(\bm x,\pmb \e) = (x_1 + \e_1,x_2/\e_2)$ and 
$T_{(-1,-1)}(\bm x,\pmb\e) = (x_1 - \e_1,x_2/\e_2)$. We may let $\Y = (0,\infty) \times (0,1)$.


\end{enumerate}

The above examples can of course be generalized to arbitrary dimensions.
Also, it is clear that it is possible to construct valid transformations in high-dimensional spaces
using combinations of valid transformations on one-dimensional spaces.

Now suppose that $g$ is a density on $\Y$, and, for $i=1,\ldots,2^k$, let $P_i=P(T_{\bz_i})$ 
be the probability of the move-type $T_{\bz_i}$. 
We assume that for each $i$, $P_i>0$
and $\sum_{i=1}^{2^k}P_i=1$. Note that this requires us to specify the $2^k$-dimensional probability
vector, which seems to be a daunting task for large $k$. However, in Section \ref{subsec:movetypeprob}
we show that this difficulty can be overcome by considering a product form of the move-type
probabilities induced by a mechanism of simulating $\bz$, which facilitates the choice of appropriate move-types
from the very large set of available move-types. This mechanism is also highly efficient computationally.

The MCMC algorithm based on transformations is given in 
Algorithm \ref{algo:GTM:multivar}.
\vspace{2cm}
\begin{algo}\label{algo:GTM:multivar} \topline MCMC algorithm based on transformation (multivariate case) \botline \normalfont \ttfamily
\begin{itemize}
 \item Input: Initial value $\supr{\bm x}{0}$, and number of iterations $N$. 
 \item For $t=0,\ldots,N-1$
\begin{enumerate}
 \item Generate $\pmb \e \sim g(\cdot)$ and an index $i \sim \mathcal M(1;P_1,\ldots,P_{2^k})$ independently.
 Actually, simulation from the multinomial distribution is not necessary; see Section \ref{subsec:movetypeprob}
 for an efficient and computationally inexpensive method of generating the index even when the number of move-types
 far exceeds $2^k$.
 \item \[ \bm x' = T_{\bz_i}(\supr{\bm x}{t},\pmb \e) \quad \textrm{ and } 
 \quad \alpha(\supr{\bm x}{t},\pmb \e) = \min\left(1, \dfrac{P(T_{\bz^c_i})}{P(T_{\bz_i})} ~\dfrac{\pi(\bm x')}{\pi(\supr{\bm x}{t})} 
 ~\left|\frac{\partial (T_{\bz_i}(\supr{\bm x}{t},\pmb \e),\pmb\e)}{\partial(\supr{\bm x}{t},\pmb \e)}\right| \right)\]
\item Set \[ \supr{\bm x}{t+1}= \left\{\begin{array}{ccc}
 \bm x' & \textsf{ with probability } & \a(\supr{\bm x}{t},\pmb\e) \\
 \supr{\bm x}{t}& \textsf{ with probability } & 1 - \a(\supr{\bm x}{t},\pmb\e)
\end{array}\right.\]
\end{enumerate}
\item End for
\end{itemize}
\botline \rmfamily
\end{algo}
In light of the above algorithm, it can be seen that for each of the transformations in the above examples, 
a mixture proposal of the form
(\ref{eqn:mixturepro_equiv}) is induced.
It will, however, be pointed out in Section \ref{sec:single_e} that a singleton $\e$ suffices
for updating multiple random variables simultaneously, which would imply singularity of the
underlying proposal distribution.
Notice that for arbitrary dimensions the additive transformation reduces to the RWMH.

Algorithm \ref{algo:GTM:multivar} indicates that updating highly correlated variables can be done 
naturally with TMCMC:
for instance, in Example 1 of this section one may select $T_{(1,1)}(\bm x,\be)$ and 
$T_{(-1,-1)}(\bm x,\be)$ with high probabilities if $x_1$ and $x_2$ are highly positively correlated
and $T_{(-1,1)}(\bm x,\be)$ and $T_{(1,-1)}(\bm x,\be)$ may be selected with high probabilities 
if $x_1$ and $x_2$ are highly 
negatively correlated.

\section{Validity and usefulness of singleton $\e$ in implementing TMCMC in high dimensions}
\label{sec:single_e}

Crucially, a singleton $\e$ suffices to ensure the validity of our algorithm, even though
many variables are to be updated. 
This indicates a very significant computational advantage over all
other MCMC-based methods: for instance, complicated simulation of hundreds of thousands of variables 
may be needed for any MCMC-based method, while, for the same problem, a single simulation
of our methodology will do. Indeed, 
in Section \ref{sec:appl:dtm} we update 160 variables using
a single $\e$ in the geostatistical problem of \ctn{Diggle98}. 
This singleton $\e$ also ensures that a mixture MH proposal density corresponding to our
TMCMC method does not exist. The last fact shows that TMCMC 
can not be a special case of the MH algorithm. On the other hand, assuming that
instead of singleton $\e$, there is an $\e_i$ associated with each of the variables $x_i$; $i=1,\ldots,k$,
then again TMCMC boils down to the MH algorithm, and, as in the univariate case, here also 
our transformations would induce a mixture 
proposal distribution for the algorithm, consisting of $2^k$ mixture components each corresponding to
a multivariate transformation.

Using singleton $\e$, for transformations other than the additive transformation, it is necessary to 
incorporate extra move types
having positive probability which change one variable using forward or backward transformation, keeping the other
variables fixed at their current values. Consider for instance,  Example 3 of Section \ref{sec:gtm:multivar}. The example
indicates that, with a singleton $\e$, it is only possible to move from $(x_1,x_2)$ to either of the following states: $(x_1+\e,x_2\e)$,
$(x_1-\e,x_2\e)$, $(x_1+\e,x_2/\e)$ and $(x_1-\e,x_2/\e)$ with positive probabilities. 
In addition, we could specify that the states $(x_1,x_2\e)$, $(x_1,x_2/\e)$,
$(x_1+\e,x_2)$ and $(x_1-\e,x_2)$ also have positive probabilities to be visited from $(x_1,x_2)$ in one step. 
We will need to specify the
visiting probabilities $P_i>0;i=1,\ldots,8$ such that $\sum_{i=1}^8P_i=1$. 
A general method of specifying the move-type probabilities, which also preserves computational efficiency,
is discussed in Section \ref{subsec:movetypeprob}.
Inclusion of the extra move types 
ensures irreducibility and aperiodicity (the definitions
are provided in Section S-3) of the Markov chain.
It is easy to see that even for higher dimensions irreducibility and aperiodicity can be enforced by bringing 
in move types of similar forms that updates one variable
keeping the remaining variables fixed. One only needs to bear in mind that the move types must be included in pairs, that is, a move type that updates only 
the $i$-th co-ordinate $x_i$ using forward transformation and the conjugate move type that updates only $x_i$ using the backward transformation
both must have positive probability of selection. 

With single $\e$ and the addition of the extra move types Algorithm \ref{algo:GTM:multivar} 
requires only slight modification. 
As in Section \ref{sec:gtm:multivar} let $\bz=(z_1,\ldots,z_k)$ be the vector of indicator variables, but
now, in addition to the values $1$ and $-1$ as before, $z_i$ can take the value $0$ as well, indicating
no change to $x_i$. The generalized definition of $z_i$ can be expressed as follows: 
\[ z_i = \left\{ \begin{array}{ccc}
                  1 & \textrm{indicates forward transformation to} & x_i \\ 
		  0   & \textrm{ indicates no change to} & x_i \\
		  -1 & \textrm{indicates negative transformation to} & x_i. 
                 \end{array}
\right.\]
Given any such indicator vector $\bz$, we define as before
$ T_{\bz} = (g_1,g_2,\ldots,g_k)$
where now we extend the definition of $g_i$ to the following: 
\[ g_i = \left\{ \begin{array}{ccc}
                  T_i^b & \textrm{ if } & z_i=-1 \\ 
		  x_i   & \textrm{ if } & z_i=0 \\
		  T_i & \textrm{ if } & z_i=1. 
                 \end{array}
\right.\]
We also need to extend the definition of the conjugate vector: given $\bz$, we 
define the conjugate vector
$\bz^c=(z^c_1,z^c_2,\ldots,z^c_k)$, where 
\[ z^c_i = \left\{ \begin{array}{ccc}
                  1 & \textrm{ if } & z_i=-1 \\ 
		  0   & \textrm{ if } & z_i=0 \\
		  -1 & \textrm{ if } & z_i=1. 
                 \end{array}
\right.\]
In this definition of $z_i$, $3^k$ values of $\bz$ are possible, so that
we now have $3^k$ possible move-types the forms $\{T_{\bz_i};i=1,\ldots,3^k\}$ on the state--space. 
Now note that the move type induced by $\bz=(0,0,\ldots,0)$ does not propose any change to the current
state $\bx$. Hence, we discard this move, and consider the remaining $3^k-1$ move-types for
our TMCMC methodology.
Now suppose that $g$ is a density on $\Y$, and, for $i=1,\ldots,3^k-1$, let $P_i=P(\bz_i)$ 
be the probability of the move-type $T_{\bz_i}$. We assume that for each $i$, $P_i>0$
and $\sum_{i=1}^{3^k-1}P_i=1$.

With these minor modifications Algorithm (\ref{algo:GTM:multivar}) goes through with $\pmb \e$
replaced by the singleton $\e$. For the sake of completeness, we present our general TMCMC
algorithm based on a single $\e$ in Section S-2 (Algorithm S-2.1).  

This strategy works for all transformations, including the examples in Section \ref{sec:gtm:multivar} where we now
assume equality of all the components of $\pmb\e$. 
Only additional move types are involved for transformations in general.  
However, we prove
in Section S-3 that the additive transformation
does not require the additional move types. Also taking account of the inherent simplicity of this transformation,
the additive transformation is our automatic choice for the applications reported in this paper. 

\subsection{Flexible and computationally efficient specification of the move-type probabilities}
\label{subsec:movetypeprob}

An apparent drawback of Algorithms \ref{algo:GTM:multivar} and S-2.1 
is the
difficulty of specifying the move-type probabilities $p(\bz)$ for all possible values of $\bz$.
For large dimension $k$, manual specification of such high-dimensional probability vector is clearly infeasible.
Moreover, step 1 of Algorithms \ref{algo:GTM:multivar} and S-2.1 
refers to
simulation from a multinomial distribution involving the very high-dimensional move-type probability vector. 
But simulation from such a high-dimensional multinomial distribution can be computationally burdensome
in the extreme if traditional methods of multinomial simulation are used, 
even if specification of the move-type probability vector is at all possible.
In this section we show how both these problems can be avoided. 
The key idea is to note that the move-type probabilities of $T_{\bz}$ can be induced by
assigning probabilities to all possible values of $\bz$; a simple, but useful way is to assign positive probabilities to 
$\{-1,0,1\}$, the possible values of each component $z_i$ of $z$. 
The latter induces a probability distribution on the set of available move-types $T_{\bz}$,
and hence on the high-dimensional multinomial distribution.
Simulation of $\bz$ by drawing $z_i$ independently for $i=1,\ldots,k$ yields the move-type
$T_{\bz}$, thus obviating the requirement of simulation from the high-dimensional multinomial
distribution using traditional methods. In this mechanism specification of only the probabilities  
$Pr(Z_i=1)$ and $Pr(Z_i=-1)$ for $i=1\ldots,k$, are required, which is manageable. 
Details follow.

Consider a $k$ $(\geq 1)$-dimensional target distribution, with associated random variables $\bx=(x_1,\ldots,x_k)$.
Then, 
we can implement the following simple rule. 
Given $\bx$, let the forward and the backward
transformations be applied to $x_i$ with probabilities $p_i$ and $q_i$, respectively. With probability $1-p_i-q_i$,
$x_i$ remains unchanged. We now define $\bz$ to be a random vector such that the random variable 
$z_i$ takes values
$-1,0,1$, with probabilities $q_i,1-p_i-q_i,p_i$, respectively. The values 
$-1,0,1$ correspond, as before, to backward transformation,
no change, and forward transformation, respectively.

This rule, which is to be applied to each of $i=1,\ldots,k$ coordinates, includes
all possible move types, including the one where none of the $x_i$ is updated, that is, $\bx$ is taken to $\bx$. 
Since
the move-type $\bx\mapsto\bx$ is redundant, this is to be rejected whenever it appears. In other words, 
we would keep simulating the discrete random vector $\bz=(z_1,\ldots,z_k)$ until at least one $z_i\neq 0$,
and would then select the corresponding move type. For any dimension, this is a particularly simple and computationally
efficient exercise, since the rejection region is a singleton, and has very small probability (particularly
in high dimensions) if either of $p_i$ and $q_i$ is high for at least one $i$.

Since now we induce the probability distribution of $T_{\bz}$ through $\bz$, we denote $P(T_{\bz})$
by $P(\bz)$.
The above method implies that the probability of a move-type, given $\bz$, is of the form 
$$P(\bz)=C\underset{\{i_1:z_{i_1}=1\}}\prod p_{i_1}\underset{\{i_2:z_{i_2}=-1\}}\prod q_{i_2}
\underset{\{i_3:z_{i_3}=0\}}\prod(1-p_{i_3}-q_{i_3}),$$
and $C$ is the normalizing constant, which arose due to
rejection of the move type $\bx\mapsto\bx$. This normalizing constant cancels in the acceptance
ratio, and so it is not required to calculate it explicitly, another instance of preservation
of computational efficiency. 
Note that the probability of the conjugate move-type is
\begin{eqnarray*}
P(\bz^c)&=&C\underset{\{i_1:z^c_{i_1}=1\}}\prod p_{i_1}\underset{\{i_2:z^c_{i_2}=-1\}}\prod q_{i_2}
\underset{\{i_3:z^c_{i_3}=0\}}\prod (1-p_{i_3}-q_{i_3})\\
&=&C\underset{\{i_1:z_{i_1}=-1\}}\prod p_{i_1}\underset{\{i_2:z_{i_2}=1\}}\prod q_{i_2}
\underset{\{i_3:z_{i_3}=0\}}\prod (1-p_{i_3}-q_{i_3}),
\end{eqnarray*}
so that the factor $\underset{\{i_3:z_{i_3}=0\}}\prod (1-p_{i_3}-q_{i_3})$ cancels in the acceptance ratio, further
simplifying computation.

Algorithm \ref{algo:GTM:multivar3} gives the simplified TMCMC algorithm based on a singleton $\e$.
\begin{algo}\label{algo:GTM:multivar3} \topline Simplified TMCMC algorithm based on a single $\e$.
\botline \normalfont \ttfamily
\begin{itemize}
 \item Input: Initial value $\supr{\bm x}{0}$, and number of iterations $N$. 
 \item For $t=0,\ldots,N-1$
\begin{enumerate}
 \item Generate $\e \sim g(\cdot)$ and simulate $\bz$ by generating 
 $z_i \sim \mathcal M(1;p_i,q_i,1-p_i-q_i)$ independently
 for $i=1,\ldots,k$.  
 \item \[ \bm x' = T_{\bz}(\supr{\bm x}{t}, \e) \quad \textrm{ and } 
 \quad \alpha(\supr{\bm x}{t}, \e) = \min\left(1, \dfrac{P(\bz^c)}{P(\bz)} ~\dfrac{\pi(\bm x')}{\pi(\supr{\bm x}{t})} 
 ~\left|\frac{\partial (T_{\bz}(\supr{\bm x}{t}, \e),\e)}{\partial(\supr{\bm x}{t}, \e)}\right| \right),\]
 where 
 \[\dfrac{P(\bz^c)}{P(\bz)}=\underset{\{i_1:z_{i_1}=-1\}}\prod \frac{p_{i_1}}{q_{i_1}}
 \underset{\{i_2:z_{i_2}=1\}}\prod \frac{q_{i_2}}{p_{i_2}}.\]

\item Set \[ \supr{\bm x}{t+1}= \left\{\begin{array}{ccc}
 \bm x' & \textsf{ with probability } & \a(\supr{\bm x}{t},\e) \\
 \supr{\bm x}{t}& \textsf{ with probability } & 1 - \a(\supr{\bm x}{t},\e)
\end{array}\right.\]
\end{enumerate}
\item End for
\end{itemize}
\botline \rmfamily
\end{algo}

For the additive transformation, the issues are further simplified. The random variable $z_i$ here takes
the value $-1$ and 1 with probabilities $p_i$ and $q_i=1-p_i$, respectively. So, only $p_i$ needs
to be specified. Since $z_i=0$ has probability zero in this setup, there is no need to perform
rejection sampling to reject any move-type. 

\subsection{Discussion on choices of $p_i$ and $q_i$}
\label{subsec:probchoice}

Interestingly, the ideas developed in Section \ref{subsec:movetypeprob} provide us with a handle to control the move-type probabilities, by simply
controlling $p_i$ and $q_i$ for each $i$. For instance, if some pilot MCMC analysis tells us that
$x_i$ and $x_j$ are highly positively correlated, then we could set $p_i$ and $p_j$ (or $q_i$ and $q_j$)
to be high provided the forward transformation on both $x_i$ and $x_j$ are increasing. 
On the other hand, if $x_i$ and $x_j$ are highly negatively correlated, then we can set $p_i$ to be 
high (low) and $q_j$ to be low (high) and so on. Apart from these choices, there are theoretically motivated
choices of $p_i$ and $q_i$ as well. Indeed, \ctn{Dey13a} prove, under
suitable regularity conditions, that additive TMCMC is geometrically ergodic when
$p_i=q_i=1/2$. Thus, at least for additive transformations, the choice $p_i=q_i=1/2$ for $i=1\ldots,k$,
seems to be reasonable from a theoretical perspective. In our TMCMC illustration of the Challenger data 
presented in Section \ref{sec:appl:logistic}
we choose $p_i,q_i$ based on the posterior correlations obtained from a pilot MCMC analysis, whereas in
the case of Rongelap data we set $p_i=q_i=1/2$.

\subsubsection{Dependence structure on $\bz$}
\label{subsubsec:dependent_z}

The procedure outlined above simulates each co-ordinate $z_i$ independently, for $i=1,\ldots,k$. 
But because the same $\e$ is used for the transformation of each co-ordinate $x_i$ of $x$, the co-ordinate
moves are dependent. However in addition, it is also possible to consider dependence between the components of $\bz$
using a hierarchical structure.
For example, for $i=1,\ldots,k$, let $\bw_i=(w_{i1},\ldots,w_{ik})\sim N(\bmu_i,\bSigma_i); i=1,2,3$,
where the parameters $\left(\bmu_i,\bSigma_i\right);i=1,2,3$ are assumed to be known.
We then set $p_i=\exp\left(w_{1i}\right)/\sum_{j=1}^3\exp\left(w_{ji}\right)$, 
$q_i=\exp\left(w_{2i}\right)/\sum_{j=1}^3\exp\left(w_{ji}\right)$, so that
$1-p_i-q_i = \exp\left(w_{3i}\right)/\sum_{j=1}^3\exp\left(w_{ji}\right)$.
These $k$-variate normal distributions induce dependence between
$\bp=(p_1,\ldots,p_k)$ and $\bp=(q_1,\ldots,q_k)$.
Thus, even though conditionally on $\{(p_i,q_i);i=1,\ldots,k\}$ $z_i$ are independent, marginalized
over $\bp$ and $\bq$, the components of $\bz$ are dependent. 
To achieve the effect
of this dependent structure in TMCMC in a theoretically valid manner, 
at each iteration of the TMCMC algorithm we can simulate 
$\bw_1,\bw_2,\bw_3$ from their respective $k$-variate normal distributions, and
from the simulated values obtain, for $i=1,\ldots,k$, 
$p_i=\exp\left(w_{1i}\right)/\sum_{j=1}^3\exp\left(w_{ji}\right)$, and 
$q_i=\exp\left(w_{2i}\right)/\sum_{j=1}^3\exp\left(w_{ji}\right)$. 
To avoid getting $p_i$ and $q_i$ too close to zero in some simulations, we can appropriately truncate the
$k$-variate normal distributions.
Once $\{(p_i,q_i);i=1,\ldots,k\}$
are obtained, conditionally on these probabilities, $z_i;i=1,\ldots,k$ will be simulated independently.
Thus, the algorithm for dependent $\bz$ admits the same form as Algorithm \ref{algo:GTM:multivar3}; only
in the first step, simulation of $\bp$ and $\bq$ from their respective
dependent distributions must precede independent simulation of $z_i;i=1,\ldots,k$. The algorithm 
(Algorithm S-4.1) and
proof of detailed balance are provided in Sections S-4 and S-5, respectively.

Note that for the additive transformation, since $q_i=1-p_i$, only the joint distribution of $\bp$
needs to be considered, with $p_i=\exp\left(w_{1i}\right)/\sum_{j=1}^2\exp\left(w_{ji}\right)$,
and it is not necessary to simulate $\bw_3$ at all.
 
Theoretically appropriate choices of $\left(\bmu_i,\bSigma_i\right);i=1,2,3$ will be our topic of future research
but from a practical point of view, one can tune these parameters 
to achieve good mixing properties and acceptance rates. 

\subsection{Advantages of TMCMC updating with single $\e$}
\label{subsec:advantages_single_e}

Standard methods like sequential RWMH may tend to be computationally
infeasible in high dimensions while inducing mixing problems due to
posterior dependence between the parameters, whereas TMCMC remains 
free from the aforementioned problems thanks to singleton $\e$ and joint updating
of all the parameters.
Specialized proposals for joint updating may be constructed for specific problems only;
for instance, block updating proposals for Gaussian Markov random fields are available
(\ctn{Rue01}). But generally, efficient block updating proposals are not available. Moreover, even in the specific
problems, simulation from the specialized block proposals and calculating
the resulting acceptance ratio are generally computationally very expensive.
In contrast, TMCMC with singleton $\e$ seems to be much more general and efficient.
Moreover, we demonstrate in Section \ref{sec:appl:logistic} in connection with the Challenger
data problem that TMCMC can outperform
well-established block proposal mechanisms, usually based on the asymptotic covariance matrix of the
maximum likelihood estimator (MLE), in terms of acceptance rate.


\section{Application of TMCMC to the Challenger dataset}\label{sec:appl:logistic}
In 1986, the space shuttle Challenger exploded during take off, killing the seven astronauts aboard. 
The explosion was the result of an O-ring failure, a splitting of a ring of rubber that seals the parts of 
the ship together. The accident was believed to be caused by the unusually cold weather ($31^0$F or $0^0$C) 
at the time of launch, as there is reason to believe that the O-ring failure probabilities increase as 
temperature decreases. The data are provided in Table S-1 
for ready reference. 
We shall analyze the data with the help of well--known logit model. Our main aim is not analyzing and 
drawing inference since it is done already in \ctn{Dalal89}, \ctn{Martz92} and \ctn{Robert04} . 
We shall rather compare the different MCMC methodologies used in Bayesian inference for logit--model. 
Let \[ \eta_i = \b_1 + \b_2x_i \] where $x_i = t_i/\max ~t_i$, $t_i$'s being the temperature at flight 
time (degrees F), $i=1,\ldots,n$. and $n=23$. Also suppose $y_i$ is the indicator variable denoting failure 
of 0-ring. We suppose $y_i$'s independently follow Bernoulli($\pi(x_i)$).

In the logit model we suppose that the log-odd ratio is a linear function of temperature at flight time, i.e.,
\[ \log\frac{\pi}{1-\pi} = \eta = \beta_1 + \beta_2x \]
which gives \[ \pi_i = \exp(\eta_i)\big/(1+\exp(\eta_i)).\]
In the absence of information regarding $(\beta_1,\beta_2)$, we specify a uniform (improper) 
prior for $(\beta_1,\beta_2)$.

We construct an appropriate additive transformation $T : \mathbb R^2 \times \mathbb R\to \mathbb R^2$ as follows.
First, we consider the form $T_{(1,1)}((\beta_1,\beta_2),\e)=(\beta_1,\beta_2)'+ (s_1\e,s_2\e)'$, where $s_1$ and $s_2$ are the standard errors of the 
maximum likelihood estimator of $(\beta_1,\beta_2)'$.
Thus, we finally obtain the transformation
\[ T_{(1,1)}((\b_1,\b_2),\e)  =  (\b_1 + 7.3773\e,\b_2 + 4.3227\e)\]
and use Algorithm \ref{algo:GTM:multivar3} with $\Y = (0,\infty)$ and 
\[ g(\e) \varpropto \exp(-\e^2/2), ~\e > 0\]
that is, the $N(0,1)$ distribution truncated to the left at zero. From the covariance matrix $\bC$ we observe that 
the correlation of $\hat \b_1$ and $\hat\b_2$ is approximately $-0.99$ and hence from our discussion in Section \ref{subsec:probchoice},
setting high probabilities to the moves $T_{(1,-1)}(\bx,\e)$ and 
$T_{(-1,1)}(\bx,\e)$ should facilitate good mixing.  Following the discussion in Section \ref{subsec:probchoice} 
we set 
$P((1,1)) = P((-1,-1)) = 0.01$ and $P((1,-1)) = P((-1,1)) = 0.49$. 

Also for comparison we use the RWMH algorithm (both joint and sequential updation) and also the MH algorithm 
with proposal
$q(\pmb\b'|\pmb\b) = N(\pmb\beta,\bSigma)$ where $\bSigma = h^2\bC$ (we take $h=1$ for our purpose) 
with $\bC$ being the large sample covariance 
matrix of the MLE $\widehat{\pmb\b}$ of $\pmb\b$.
Table \ref{tab:challenger:analysis} gives the posterior summaries and Figure \ref{fig:challenger:trace} 
gives the trace plots of $\b_1$ and $\b_2$ for TMCMC sampler and the MH sampler. It is seen that the mixing 
is excellent even though a single $\e$ has been used.

\begin{table}
\begin{tiny}
\begin{tabularx}{\textwidth}{Z|Z|Z|Z|Z|Z|Z|Z|Z|Z} \hline \hline
{variable} & method & {acceptance rate (\%)} & mean   &  std & 2.5\%*  & 25\%* &  50\%*  &  75\%*  & 97.5\%* \\ \hline \hline
	& RWMH  &    42.17         & 19.119  & 8.078 & 4.909 & 13.481 & 18.475 & 24.227 & 38.176  \\ \cline{2-10}
$\b_1$	& MH &       42.60         & 18.930  & 8.513 & 5.011 & 12.823 & 17.981 & 23.957 & 38.206 \\ \cline{2-10}
	& TMCMC &      73.23         & 18.973  & 7.944 & 4.970 & 12.881 & 16.210 & 21.685 & 37.877 \\ \hline \hline
	& RWMH &     48.14         & -23.724 & 9.613 & -46.272  & -29.786 & -22.984 & -17.019 & -6.7792\\ \cline{2-10}
$\b_2$	& MH &       42.60**        & -23.491 & 10.128 & -46.461 & -29.464 & -22.353 & -16.261 & -6.956 \\ \cline{2-10}
	& TMCMC &      73.23**        & -23.165 &  9.762 & -46.404 & -28.891 & -22.282 & -16.446 & -7.026 \\ \hline \hline
\end{tabularx}
\end{tiny}
\caption{\small Summary of the posterior samples based on MCMC runs of length 100,000 out of which first 20,000 samples 
are discarded as burn-in.} {\small RWMH = Random walk Metropolis-Hastings, MH = Metropolis-Hastings with bivariate normal 
proposal, TMCMC = MCMC based on transformation \\ * : posterior sample quantiles. \\ **: same as acceptance ratio 
for  $\beta_1$ since updated jointly. }
\label{tab:challenger:analysis}
\end{table}
\begin{figure}
 \centering\includegraphics[width=\textwidth]{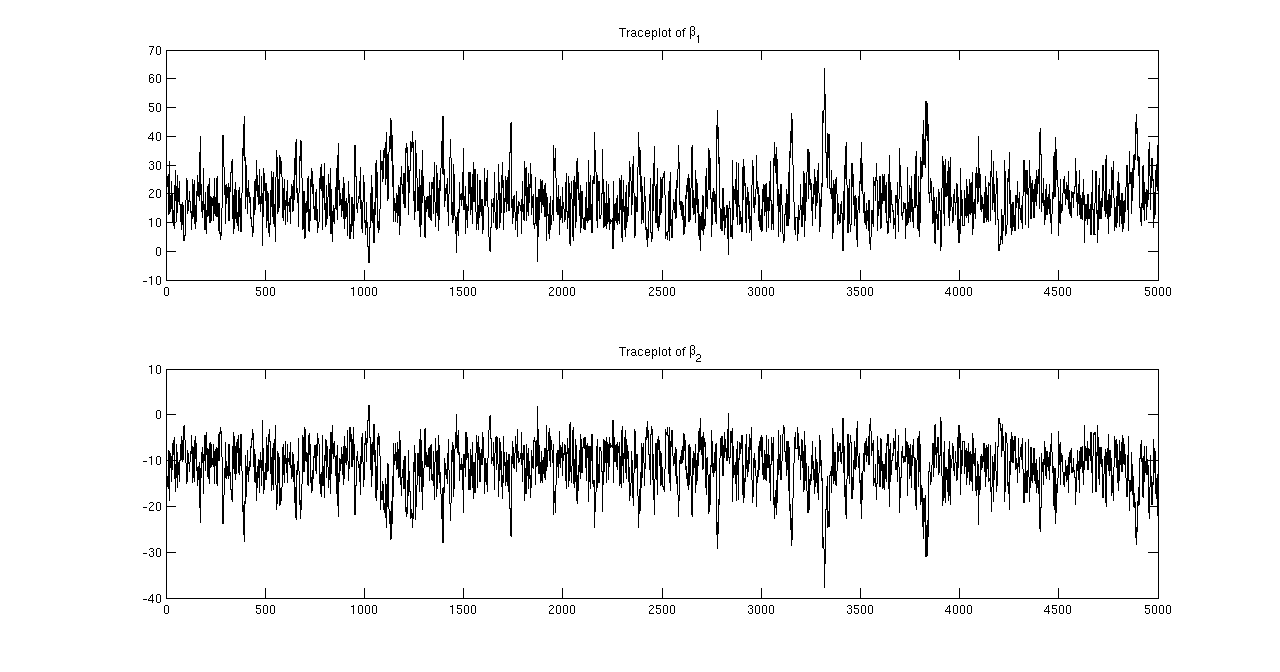}
(a)
\centering\includegraphics[width=\textwidth]{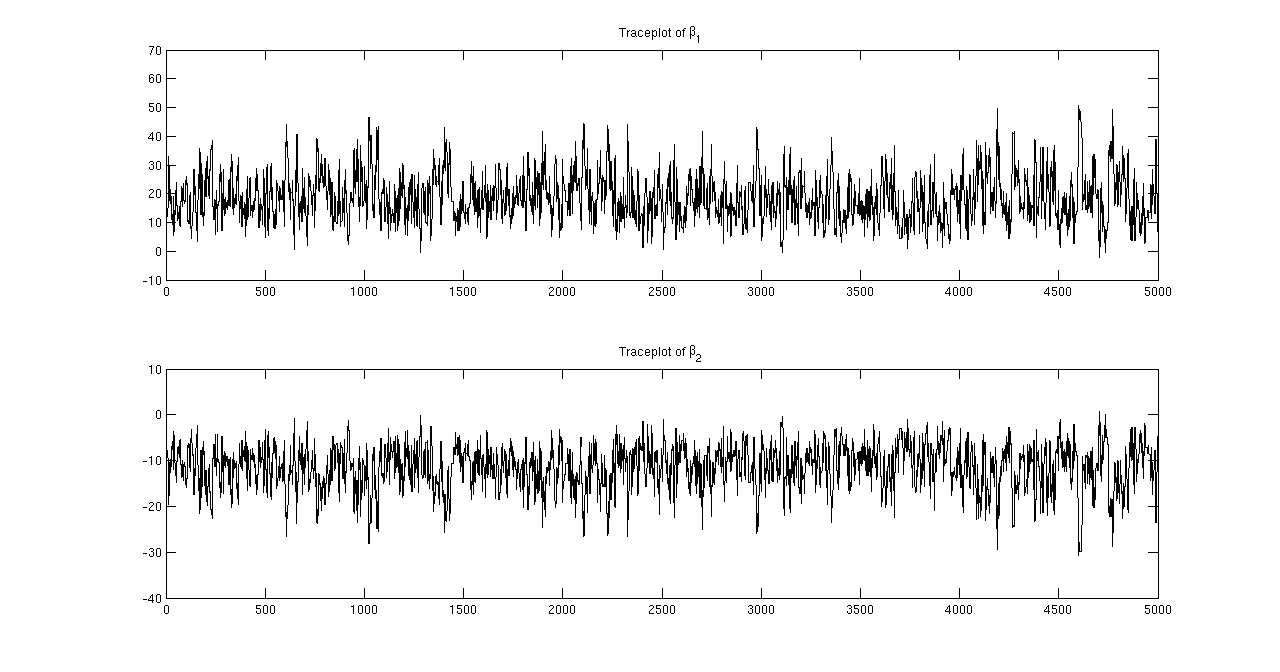}
(b)
\caption{Trace plots of $\b_1$ and $\b_2$ (a) TMCMC (b) MH}
\label{fig:challenger:trace}
\end{figure}

Notice the excellent result of the MCMC based on transformations. The acceptance ratio is almost twice as 
large as those for other two MH algorithms. 
As remarked in Section \ref{subsec:advantages_single_e}, 
indeed TMCMC outperformed the
MH block proposal based on the large sample covariance matrix of the MLE of $\pmb\b$ in terms
of acceptance rate.
Also for implementing TMCMC 
we need to simulate only one $\e$ in each step. In the RWMH with sequential updating and in MH based on 
bivariate normal proposal we need two such $\epsilon$'s. In the RWMH we need to calculate the likelihood 
twice in each iteration. So, TMCMC dominates the other two in this respect.
It can be easily anticipated, in light of the theoretical arguments regarding acceptance rate presented in Section S-6, 
that for joint RWMH the acceptance rate would be even lower.
In Section \ref{sec:appl:dtm}, where we consider a 160-dimensional problem, we show that,
that TMCMC outperforms joint RWMH 
by a substantially large margin in terms
of acceptance rate.

\section{Application of TMCMC to the geostatistical problem of radionuclide concentrations on Rongelap Island}
\label{sec:appl:dtm}

\subsection{Model and prior description}
\label{subsec:model_prior}
We now consider the much analyzed radionuclide count data on Rongelap Island
(see, for example, \ctn{Diggle97}, \ctn{Diggle98}, \ctn{Chris04}, \ctn{Chris06}), and illustrate
the performance of TMCMC with a singleton $\e$. For $i=1,\ldots,157$, 
\ctn{Diggle98} model the count data as
$$Y_i\sim Poisson(M_i),$$ where $$M_i=t_i\exp\{\beta+S(\bx_i)\};$$ $t_i$ is the duration of observation
at location $\bx_i$, $\beta$ is an unknown parameter and $S(\cdot)$ is a zero-mean Gaussian process
with isotropic covariance function of the form 
$$Cov\left(S(\bx^\star_1),S(\bx^\star_2)\right)
=\sigma^2\exp\{-\left(\alpha\parallel\bx^\star_1-\bx^\star_2\parallel\right)^{\delta}\}$$
for any two locations $\bx^\star_1,\bx^\star_2$. In the above, $\parallel\cdot\parallel$ denotes
the Euclidean distance between two locations, and $(\sigma^2, \alpha, \delta)$ are unknown parameters.
Typically in the literature $\delta$ is set equal to 1 (see, {\em e. g.} \ctn{Chris06}), which we
adopt. We assume uniform priors on the entire parameter space corresponding to
$(\beta, \log(\sigma^2), \log(\alpha))$. 

We remark that since the Gaussian process $S(\cdot)$ does not define a Markov random field,
the block updating proposal developed by \ctn{Rue01} is not directly applicable here.
\ctn{Rue09} attempt to develop deterministic approximations to latent Gaussian models,
but the scope of such approximations is considerably restricted by the
conditional independence (Gaussian Markov random field) assumption (\ctn{Banerjee09}). 
Thanks to the generality and efficiency of our proposed methodology, it seems most appropriate 
to fit the Rongelap island 
model using TMCMC with singleton $\e$.

\subsection{Results of additive TMCMC with singleton $\e$}
\label{subsec:tmcmc_results}

Drawing $\e\sim N(0,1)\mathbb I(\e>0)$, we considered the following additive transformation 
\begin{align}
T(\beta,\e)&=\beta\pm 2\e,\notag\\
T(\log(\sigma^2),\e)&=\log(\sigma^2)\pm 5\e,\notag\\
T(\log(\alpha),\e)&=\log(\alpha)\pm 5\e,\notag\\
T(S(\bx_i),\e)&=S(\bx_i)\pm 2\e; \ \ \mbox{for} \ \ i=1,\ldots,157\notag
\end{align}
The scaling factors associated with $\e$ in each of the transformations are chosen
on a trial-and-error basis after experimenting with several initial (pilot) runs of TMCMC.
We assigned equal probabilities to all the $2^{160}$ move types. 
Move types are selected by independently generating $z_i$ taking the values $+1$ and $-1$ with equal probabilities,
that is, we set $p_i=q_i=1/2$ for $i=1,\ldots,160$. As mentioned in Section \ref{subsec:probchoice} this choice
of equal probabilities of forward and backward transformation is motivated by our result on
geometric ergodicity.

After discarding the first $2\times 10^7$ iterations as burn-in, we stored 
1 in every 100 iterations in the next $3.5\times 10^7$ iterations. 
This entire simulation took about a week to run on an ordinary laptop machine
and about 3 days on a workstation.
The autocorrelation functions of the variables (after further thinning by 10) of our TMCMC run, displayed in   
Figure \ref{fig:tmcmc_dtm}, indicates reasonable mixing properties. 
 The acceptance rate, after discarding the burn-in period, is 
0.43\% (considering the complete run of TMCMC after burn-in, that is, including thinning as well).
\begin{figure}
 \centering
\includegraphics[width=0.4\textwidth]{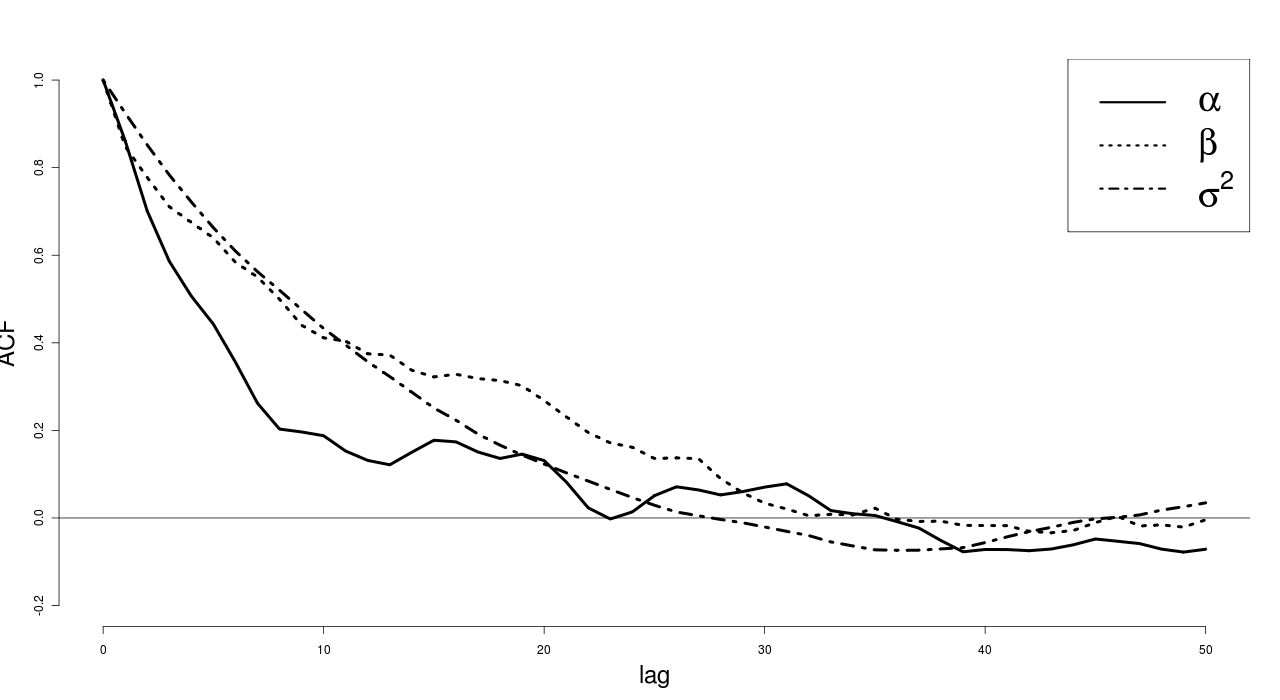}
\includegraphics[width=0.4\textwidth]{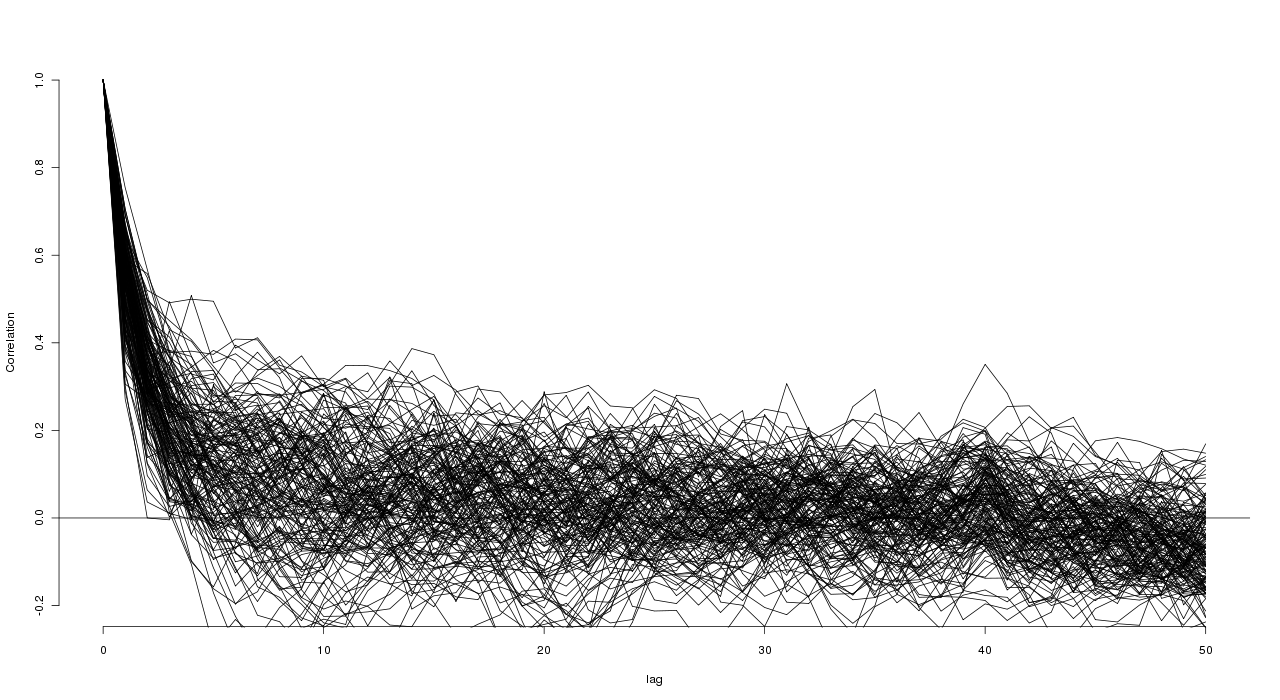}
\caption{Autocorrelation plots of the variables, $\alpha$,$\beta$, $\log\sigma^2$ (left-panel) and $s_{1},\ldots,s_{157}$ (right-panel) in the TMCMC run.}
\label{fig:tmcmc_dtm}
\end{figure}

\subsection{Comparison with joint RWMH}
\label{subsec:comparison_RWMH}

We also implemented a joint RWMH using the same additive transformation
as in Section \ref{subsec:tmcmc_results} but with different $\e$'s for each unknown.
Now the acceptance rate reduced to 0.0005\%. 
These observations are broadly in keeping with the theoretical discussions 
presented in Section S-6. 

\section{Application of TMCMC to doubly--intractable problem}\label{sec:doublyintract}
Doubly--intractable distributions arise quite frequently in fields like circular statistics, directed graphical models, Markov point processes etc. Even some standard distributions like gamma and beta involve intractable normalizing constants. 
Formally, a density $h(\by|\t)$ of the data set $\by=(y_1,\ldots,y_n)'$ is said to be doubly--intractable if it is of the form
\[ h(\by|\t) = f(\by|\t)/Z(\t)\]
where $Z(\t)$ is a function that is not available in closed form. So if we put a prior $\pi(\t)$ 
on $\t$, then the posterior is given by
\[ \pi(\t|\by)  = \dfrac{1}{c(\by)}~\dfrac{f(\by|\t)}{Z(\t)}\pi(\t) \quad \textrm{ where } \quad c(\by) = \dint_\Theta \dfrac{f(\by|\t)}{Z(\t)}\pi(\t)~d\t \]
Thus, if we try to apply MH like algorithms then the acceptance ratio will involve ratio of the 
function $Z(\cdot)$ at two parameter points $\t$ and $\t'$. Hence directly applying MH may not be feasible. 
Works by \cite{Moller04} and \cite{Murray06} are significant in this field. A double MH sampler approach 
is taken in \cite{Liang10}. In this section we briefly discuss the bridge--exchange algorithm by 
\cite{Murray06} and show how our application of TMCMC in the bridge--exchange algorithm may facilitate 
fast computation.

Suppose $M \in \mathbb N$ is the \emph{bridge size}, $\b_m = m/(M+1),~m=0,\ldots,M$. Define the density
\[ p_m(\bx|\t,\t') \propto f(\bx | \t)^{\b_m}f(\bx|\t')^{1-\b_m} \equiv f_m(\bx|\t,\t'), \quad m = 0,\ldots,M.\]
Obviously, $\bx$ is of the same dimensionality as $\by$; that is, $\bx=(x_1,\ldots,x_n)'$.
Further suppose that for each $m$, $~T_m(\bx\to \bx'|\t,\t')$ is a kernel satisfying the detailed balance condition
\[ T_m(\bx\to \bx'|\t,\t')p_m(\bx|\t,\t') = T_m(\bx'\to \bx|\t,\t')p_m(\bx'|\t,\t').\]
Now with a proposal density $q(\t\to\t'|\by)$ for the parameter, the bridge--exchange algorithm is given 
below.
\begin{algo}\label{algo:bridgexchange}\topline
The bridge--exchange algorithm \botline \normalfont \ttfamily 
\begin{itemize}
\item Input: initial state $\t_0$, length of the chain $N$, \#bridge \\ levels $M$.
\item \textbf{For} $t = 0,\ldots,N-1$
\begin{itemize}
 \item[1.] Propose $\t' \sim  q(\t' \from \t_t|\by)$
 \item[2.] Generate an auxiliary variable with exact sampling: \[ \bx_0 \sim p_0(\bx_0|\t,\t') \equiv f(\bx_0|\t')/Z(\t)\]
 \item[3.] Generate M further auxiliary variables with transition \\ operators:
\begin{eqnarray*}
 \bx_1 & \sim & T_1(\bx_0\to \bx_1 |\t,\t') \\
\bx_2 & \sim & T_2(\bx_1\to \bx_2 |\t,\t') \\
& \vdots & \\
\bx_M & \sim & T_M(\bx_{M-1}\to \bx_M |\t,\t') \\
\end{eqnarray*}
\item[4.] Compute $$\a(\t' \from \t_t) = \dfrac{q(\t'\to \t|\by) \pi(\t') f(\by|\t')}{q(\t\to \t'|\by) \pi(\t) f(\by|\t)} \prod_{m=0}^M\dfrac{f_{m+1}(\bx_m|\t,\t')}{f_{m}(\bx_m|\t,\t')}$$
 \item[5.] Set \[\t_{t+1} = \left\{\begin{array}{ccc}
 \t' & \textsf{ with probability } & \a(\t' \from \t_{t}) \\
 \t_{t}& \textsf{ with probability } & 1 - \a(\t' \from t_{t})
\end{array}\right.\]
\end{itemize}
\item \textbf{end for}
\end{itemize}
\rmfamily
\botline
\end{algo}

Now we see that, since each of the auxiliary variables $\bx_m,~m=1,\ldots,M$, is $n$-dimensional, generation of these
auxiliary variables may be computationally 
demanding if the sample size $n$ is moderate or large especially when one has to simulate from the 
sample space using accept-reject algorithms as in the case of circular variables. 
For any kernel $T_m$ which is not based on TMCMC, $O(nM)$ variables are required to be
generated from the state--space per iteration. Appealing to TMCMC, recall that with the 
additive transformation with a single $\e$, the kernel still satisfies the detailed balance condition. 

We assume that $\statesp$ is a group under some binary operation and that 
there is a homomorphism from $(\R^p,+)$ to $\statesp$ for some $p \in \mathbb N$. So we denote the 
binary operation on $\statesp$ by `$+$' itself. Let $g$ be a density on $\statesp$. 
We construct the kernels $T_m$ as follows:

\begin{algo}\label{algo:T_k}\topline
Construction of $T_m$ \botline \normalfont \ttfamily 
\begin{itemize}
 \item[1.] Generate $\e \sim g(\e)$ and $\bz\sim P(\bz)$, where $P(\bz)$ is some suitable distribution
 of $\bz$; $P(\bz)$ can be chosen as in the different versions of our TMCMC algorithm.
 \item[2.] Define the vector $\bx'$ by $$x_i' = \begin{cases}
                    x_{m-1,i} + a_i\e & \textrm{ if } ~z_i=1 \\ x_{m-1,i} - a_i\e & \textrm{ if } ~z_i=-1
                   \end{cases}$$
 \item[3.] Set $\a(\bx_{m-1} \to \bx') = \min\left( \dfrac{P(\bz^c)~f_m(\bx'|\t,\t')}{P(\bz)~f_m(\bx|\t,\t')},1\right)$
 \item[4.] Set $$\bx_{m} = \begin{cases}
\bx' & \textrm{ with probability } ~\a(\bx_{m-1} \to \bx') \\
\bx_{m-1} & \textrm{ with probability } ~1 - \a(\bx_{m-1} \to \bx')   \end{cases}$$ 
\end{itemize}

\rmfamily
\botline
\end{algo}

In this way we need only $O(M)$ simulations per iteration. Homomorphism from $(\R^p,+)$ to $\statesp$ holds in many cases, for example, in circular models where the state--space is $(-\pi,\pi]$ is a group with respect to addition modulo $\pi$.

\begin{figure}
 \centering\includegraphics[width=0.9\textwidth]{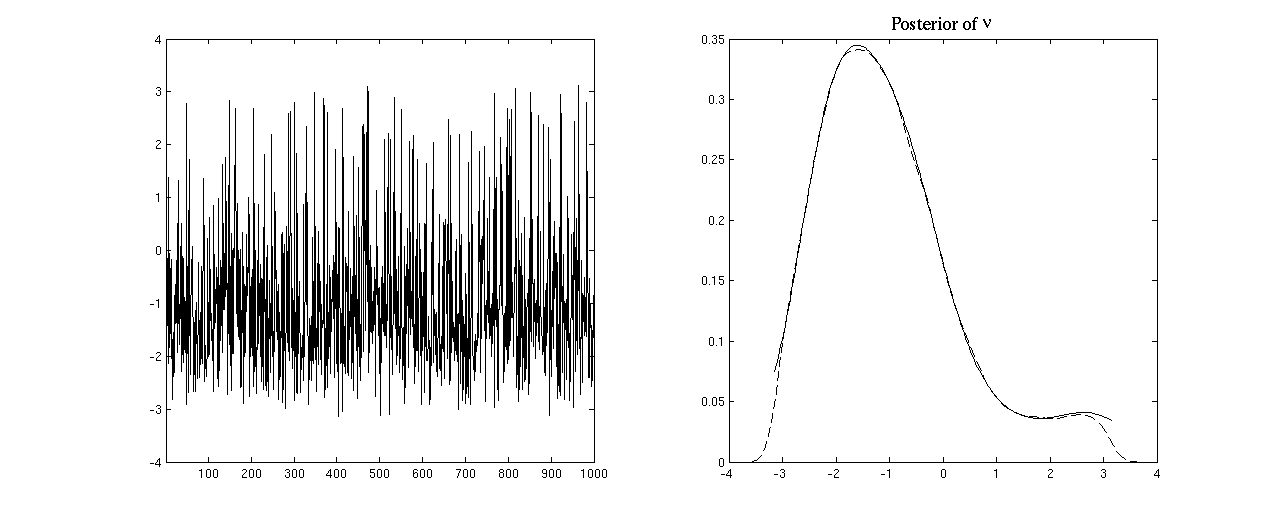}
\caption{Left panel: Trace plot of last 1,000 samples. Right panel: exact posterior density  of $\nu$ (solid line) and it's estimate (dash-dotted line).}
\label{fig:bridgetmcmc_nu}
\end{figure}

\subsection{Simulation study to illustrate TMCMC in bridge-exchange algorithm} 
Here we illustrate our method for a circular model of the form
\[ h(y|\nu) = \dfrac{1}{Z(\nu)}\exp(\cos(y + \nu\sin(y))),~ -\pi < y,\nu \leq \pi,\]
We generate a sample of size 20 from $h(y|\nu=0)$ and estimate the parameter 
$\nu$ based on this sample. The prior chosen on $\nu$ is the uniform distribution on $(-\pi,\pi]$
and $g(\cdot)$ is chosen to be the normal distribution with mean 0 and variance 1 restricted on the set 
$(0,\pi]$. Since the components of $\bx_0$ are $iid$, we used 
$Pr(Z_i=1) = Pr(Z_i=-1)=1/2$ and $a_i = 1$ for each $i$. 
We set $M = 100$ and chose $q(\nu'|\nu)$ to be the Von-mises distribution with mean $\nu$ and 
concentration 0.5 to keep the acceptance level around 63\%.

 The right panel of Figure \ref{fig:bridgetmcmc_nu} shows that the estimated posterior density  of 
 $\nu$ is very close to the exact posterior density. The little discrepancy at the tails are due to 
 the fact that $\nu$ is a circular variable and hence its support is $(-\pi,\pi]$ and the density 
 is \emph{not} zero at the end points -- a fact that is not incorporated in the kernel density estimator. 
 The left panel of the same figure shows that the mixing is excellent. Notice that here we have saved 
 $100(nM - M)/nM$ = 95\% simulations.

\section{Summary, conclusions and future work}
\label{sec:conclusions}

In this paper we have proposed a novel MCMC method that uses deterministic transformations
and move types to update the Markov chain. 
We have shown that our algorithm TMCMC generalizes the MH algorithm boiling down to MH
with a specialized proposal density in one-dimensional cases.
For higher dimensions if each component $x_i$ of the random vector to be updated
is associated with a distinct $\e_i$, then TMCMC again boils down to the MH algorithm 
with a specialized proposal density. But in dimensions greater than one, with less number of distinct $\e_i$
than the size of the random vector to be updated, TMCMC does not admit any MH representation.
Several versions of TMCMC have been detailed in this paper and in the supplement. In Section S-6
of the supplement, under reasonable regularity conditions we have provided and compared the 
asymptotic forms of the acceptance rates of RWMH and additive TMCMC when the dimensionality increases
to infinity
and have shown that the latter converges to zero at a much slower rate. 
That HMC is also a special case of TMCMC, is also explained in Section S-7; in addition, we have 
provided asymptotic forms of the acceptance rate of HMC under reasonable regularity conditions
and have shown that, as the dimensionality grows to infinity, the forms converge to zero at 
much faster rates than additive TMCMC.
In Section S-8 we also contrasted TMCMC with the transformation-based methods of \ctn{Liu99}, \ctn{Liu00}, and \ctn{Kou05}.

The advantages of TMCMC are more prominent
in high dimensions, where simulating a single random variable can update many parameters
at the same time, thus saving a lot of computing resources. That many variables can be updated
in a single block without compromising
much on the acceptance rate, seems to be another quite substantial advantage provided by our algorithm. 
We illustrated with examples that TMCMC can outperform MH significantly, particularly
in high dimensions. The computational gain of using TMCMC for simulations from doubly intractable
distributions, is also significant, and is illustrated with an example.

In this article we have developed TMCMC for continuous state spaces. 
However, in Section S-9 we show how TMCMC can be generalized to discrete state spaces as well.
A complete development of TMCMC for discrete cases will be a subject of our future work.

\section*{Acknowledgment}
We sincerely thank the reviewer whose comments have led to an improved version of our manuscript.
Conversations with Dr. Ranjan Maitra has also led to improved presentation of some of the ideas.


\begin{thebibliography}{21}

\bibitem[Banerjee(2009)]{Banerjee09}
S.~Banerjee.
\newblock {D}iscussion: {A}pproximate {B}ayesian {I}nference for {L}atent
  {G}aussian {M}odels by using {I}ntegrated {N}ested {L}aplace
  {A}pproximations.
\newblock \emph{Journal of the Royal Statistical Society. Series B},
  71:\penalty0 365, 2009.

\bibitem[Christensen(2004)]{Chris04}
O.~F. Christensen.
\newblock {M}onte {C}arlo {M}aximum {L}ikelihood in {M}odel-{B}ased
  {G}eostatistics.
\newblock \emph{Journal of Computational and Graphical Statistics},
  13:\penalty0 702--718, 2004.

\bibitem[Christensen(2006)]{Chris06}
O.~F. Christensen.
\newblock {R}obust {M}arkov {C}hain {M}onte {C}arlo {M}ethods for {S}patial
  {G}eneralized {L}inear {M}ixed {M}odels.
\newblock \emph{Journal of Computational and Graphical Statistics},
  15:\penalty0 1--17, 2006.

\bibitem[Dalal et~al.(1989)Dalal, Fowlkes, and Hoadley]{Dalal89}
S.~R. Dalal, E.~B. Fowlkes, and B.~Hoadley.
\newblock {R}isk {A}nalysis of the {S}pace {S}huttle: pre-{C}hallenger
  {P}rediction of {F}ailure.
\newblock \emph{Journal of the American Statistical Association}, 84:\penalty0
  945--957, 1989.

\bibitem[Dey and Bhattacharya(2013)]{Dey13a}
K.~K. Dey and S.~Bhattacharya.
\newblock {O}n {G}eometric {E}rgodicity of {A}dditive {T}ransformation {B}ased
  {M}arkov {C}hain {M}onte {C}arlo.
\newblock Technical report, {I}ndian {S}tatistical {I}nstitute, 2013.

\bibitem[Diggle et~al.(1997)Diggle, Tawn, and Moyeed]{Diggle97}
P.~J. Diggle, J.~A. Tawn, and R.~A. Moyeed.
\newblock {G}eostatistical {A}nalysis of {R}esidual {C}ontamination from
  {N}uclear {W}eapons {T}esting.
\newblock In V.~Barnet and K.~F. Turkman, editors, \emph{Statistics for
  Environment 3: Pollution Assessment and Control}, pages 89--107. Chichester:
  Wiley, 1997.

\bibitem[Diggle et~al.(1998)Diggle, Tawn, and Moyeed]{Diggle98}
P.~J. Diggle, J.~A. Tawn, and R.~A. Moyeed.
\newblock {M}odel-{B}ased {G}eostatistics (with discussion).
\newblock \emph{Applied Statistics}, 47:\penalty0 299--350, 1998.

\bibitem[Dutta(2012)]{Dutta10}
S.~Dutta.
\newblock {M}ultiplicative {R}andom {W}alk {M}etropolis-{H}astings on the
  {R}eal {L}ine.
\newblock \emph{Sankhya B}, 74:\penalty0 315--342, 2012.

\bibitem[Dutta and Bhattacharya(2013)]{Dutta13b}
S.~Dutta and S.~Bhattacharya.
\newblock {S}upplement to ``{M}arkov {C}hain {M}onte {C}arlo {B}ased on
  {D}eterministic {T}ransformations'', 2013.
\newblock Submitted.

\bibitem[Guan and Krone(2007)]{Guan07}
Y.~Guan and S.~M. Krone.
\newblock {S}mall-{W}orld {MCMC} and {C}onvergence to {M}ulti-{M}odal
  {D}istributions: {F}rom {S}low {M}ixing to {F}ast {M}ixing.
\newblock \emph{The Annals of Applied Probability}, 17:\penalty0 284--304,
  2007.

\bibitem[Kou et~al.(2005)Kou, Xie, and Liu]{Kou05}
S.~C. Kou, X.~S. Xie, and J.~S. Liu.
\newblock {B}ayesian {A}nalysis of {S}ingle-{M}olecule {E}xperimental {D}ata.
\newblock \emph{Applied Statistics}, 54:\penalty0 469--506, 2005.

\bibitem[Liang(2010)]{Liang10}
F.~Liang.
\newblock {A} {D}ouble {M}etropolis-{H}astings {S}ampler for {S}patial {M}odels
  with {I}ntractable {N}ormalizing {C}onstants.
\newblock \emph{Journal of Statistical Computation and Simulation},
  80:\penalty0 1007--1022, 2010.

\bibitem[Liu(2001)]{Liu01}
J.~Liu.
\newblock \emph{{M}onte {C}arlo {S}trategies in {S}cientific {C}omputing}.
\newblock Springer-Verlag, New York, 2001.

\bibitem[Liu and Sabatti(2000)]{Liu00}
J.~S. Liu and S.~Sabatti.
\newblock {G}eneralized {G}ibbs {S}ampler and {M}ultigrid {M}onte {C}arlo for
  {B}ayesian {C}omputation.
\newblock \emph{Biometrika}, 87:\penalty0 353--369, 2000.

\bibitem[Liu and Yu(1999)]{Liu99}
J.~S. Liu and Y.~N. Yu.
\newblock {P}arameter {E}xpansion for {D}ata {A}ugmentation.
\newblock \emph{Journal of the American Statistical Association}, 94:\penalty0
  1264--1274, 1999.

\bibitem[Martz and Zimmer(1992)]{Martz92}
H.~F. Martz and W.~J. Zimmer.
\newblock {T}he {R}isk of {C}atastrophic {F}ailure of the {S}olid {R}ocket
  {B}oosters on the {S}pace {S}huttle.
\newblock \emph{The American Statistician}, 46:\penalty0 42--47, 1992.

\bibitem[M\o{}ller et~al.(2004)M\o{}ller, Pettitt, Berthelsen, and
  Reeves]{Moller04}
J.~M\o{}ller, A.~N. Pettitt, K.~K. Berthelsen, and R.~W. Reeves.
\newblock {A}n {E}fficient {M}arkov {C}hain {M}onte {C}arlo {M}ethod for
  {D}istributions with {I}ntractable {N}ormalising {C}onstants.
\newblock Technical report, Department of Mathematical Sciences, Aalborg
  University, 2004.

\bibitem[Murray et~al.(2006)Murray, Ghahramani, and MacKay]{Murray06}
I.~Murray, Z.~Ghahramani, and D.~J.~C MacKay.
\newblock {MCMC} for {D}oubly-{I}ntractable {D}istributions.
\newblock In R.~Dechter and T.~S. RIchardson, editors, \emph{Proceedings of the
  22nd Annual Conference on Uncertainty in Artificial Intelligence (UAI-06)},
  pages 359--366. AUAI Press, 2006.

\bibitem[Robert and Casella(2004)]{Robert04}
C.~P. Robert and G.~Casella.
\newblock \emph{{M}onte {C}arlo {S}tatistical {M}ethods}.
\newblock Springer-Verlag, New York, 2004.

\bibitem[Rue(2001)]{Rue01}
H.~Rue.
\newblock {F}ast sampling of {G}aussian {M}arkov random fields.
\newblock \emph{Journal of the Royal Statistical Society. Series B},
  63:\penalty0 325--338, 2001.

\bibitem[Rue(2009)]{Rue09}
H.~Rue.
\newblock {A}pproximate {B}ayesian {I}nference for {L}atent {G}aussian {M}odels
  by {U}sing {I}ntegrated {N}ested {L}aplace {A}pproximations.
\newblock \emph{Journal of the Royal Statistical Society. Series B},
  71:\penalty0 319--392, 2009.

\end{thebibliography}

\end{document}


\bibliographystyle{natbib}

\def\spacingset#1{\renewcommand{\baselinestretch}%
{#1}\small\normalsize} \spacingset{1}


\if0\blind
{
  \title{\bf Supplement to ``Markov Chain Monte Carlo Based on Deterministic Transformations"}
   \author{Somak Dutta\thanks{Corresponding e-mail: sdutta@galton.uchicago.edu}\\
    Department of Statistics\\
    University of Chicago\\
    and\\
   Sourabh Bhattacharya\\
   Bayesian and Interdisciplinary Research Unit\\
  Indian Statistical Institute
 }
\maketitle
 } \fi

\if1\blind
{
 \bigskip
 \bigskip
 \bigskip
 \begin{center}
 {\LARGE\bf Markov Chain Monte Carlo Based on Deterministic Transformations}
 \end{center}

  \medskip
  } \fi

  \bigskip

\renewcommand\thefigure{S-\arabic{figure}}
\renewcommand\thetable{S-\arabic{table}}
\renewcommand\thesection{S-\arabic{section}}

\spacingset{1.45}  

Throughout, we refer to our main article \ctn{Dutta13} as DB. 

\section{Proof of detailed balance for TMCMC}
\label{sec:detailed_balance}
The detailed balance condition is proved as follows: Suppose $\by = T_{\bz}(\bx,\be) \in T_{\bz}(\bx,\Y)$, 
then $\bx = T_{\bz^c}(\by,\be)$. Hence, the kernel $K$ satisfies,
\begin{eqnarray*}
 \pi(\bx)K(\bx \to \by) & = & \pi(\bx)~P(T_{\bz})~g(\be)\min\left\{1,\dfrac{P(T_{\bz^c}) \pi(\by)}
 {P(T_{\bz})\pi(\bx)}J_{\bz}(\bx,\be)\right\} \\
& = & g(\be)~\min\left\{\pi(\bx)~P(T_{\bz}),\pi(\by)P(T_{\bz^c})~J_{\bz}(\bx,\be)\right\}
\end{eqnarray*}
and
\begin{eqnarray*}
 \pi(\by)K(\by \to \bx) & = & \pi(\by)~P(T_{\bz^c})~g(\be)J_{\bz}(\bx,\be)\min\left\{1,\dfrac{P(T_{\bz})
 \pi(\bx)}{P(T_{\bz^c})\pi(\by)}J_{\bz^c}(\by,\be)~\right\} \\
& = & g(\be)~\min\left\{\pi(\by)~P(T_{\bz^c})J_{\bz}(\bx,\be),\pi(\bx)P(T_{\bz})\right\}
\end{eqnarray*}
where $J_{\bz}(\bx,\be) = \left| \partial (T_{\bz}(\bx,\be),\be)/\partial (\bx,\be)\right|$ satisfies
\[ J_{\bz^c}(T_{\bz}(\bx,\be),\be) \times J_{\bz}(\bx,\be) = 1 \quad\textrm{ since }\quad  
T_{\bz^c}(T_{\bz}(\bx,\be),\be) = \bx.  \]

\section{General TMCMC algorithm based on a single $\e$}
\label{sec:general_tmcmc}

\begin{algo}\label{algo:GTM:multivar2} \topline General TMCMC algorithm based on a single $\e$.
\botline \normalfont \ttfamily
\begin{itemize}
 \item Input: Initial value $\supr{\bm x}{0}$, and number of iterations $N$. 
 \item For $t=0,\ldots,N-1$
\begin{enumerate}
 \item Generate $\e \sim g(\cdot)$ and an index $i \sim \mathcal M(1;p_1,\ldots,p_{3^k-1})$ independently.
 Again, actual simulation from the high-dimensional multinomial distribution is not necessary; 
 see Section 3.1 of DB. 
 \item \[ \bm x' = T_{\bz_i}(\supr{\bm x}{t}, \e) \quad \textrm{ and } 
 \quad \alpha(\supr{\bm x}{t}, \e) = \min\left(1, \dfrac{P(T_{\bz^c_i})}{P(T_{\bz_i})} ~\dfrac{\pi(\bm x')}{\pi(\supr{\bm x}{t})} 
 ~\left|\frac{\partial (T_{\bz_i}(\supr{\bm x}{t}, \e),\e)}{\partial(\supr{\bm x}{t}, \e)}\right| \right)\]
\item Set \[ \supr{\bm x}{t+1}= \left\{\begin{array}{ccc}
 \bm x' & \textsf{ with probability } & \a(\supr{\bm x}{t},\e) \\
 \supr{\bm x}{t}& \textsf{ with probability } & 1 - \a(\supr{\bm x}{t},\e)
\end{array}\right.\]
\end{enumerate}
\item End for
\end{itemize}
\botline \rmfamily
\end{algo}

\section{Convergence properties of additive TMCMC}\label{sec:theorems}
In this section we prove some convergence properties of the TMCMC in the case of the additive transformation. 
%
Before going into our main result we first borrow some definitions from the MCMC literature. 
\begin{definition}[Irreducibility] A Markov transition kernel $K$ is $\varphi-$irreducible, where $\varphi$ is a nontrivial measure, if for every $x \in \statesp$ and for every measurable set $A$ of $\statesp$ with $\varphi(A) > 0$, 
there exists $n\in \mathbb N$, such that $K^n(x,A) > 0.$
\end{definition}
\begin{definition}[Small set] A measurable subset $E$ of $\statesp$ is said to be small if there is an $n \in \mathbb N$, a constant $c > 0$, possibly depending on $E$ and a finite measure $\nu$ such that
\[ K^n(x,A) \geq c~\nu(A), \qquad \forall ~A \in \mathcal{B}(\statesp),~\forall ~x \in E\]
\end{definition}

\begin{definition}[Aperiodicity] A Markov kernel $K$ is said to be periodic with period $d > 0$ if the state-space $\statesp$ can be partitioned into $d$ disjoint subsets $\statesp_1,\statesp_2,\ldots,\statesp_d$ with 
\[ K(x,\statesp_{i+1}) = 1 ~\forall ~x\in \statesp_i, ~i=1,2,\ldots,d-1\]
and $K(x,\statesp_1) = 1~\forall ~x \in \statesp_d$.

A Markov kernel $K$ is aperiodic if for no $d\in\mathbb N$ it is periodic with period $d$.
\end{definition}

\subsection{Additive transformation with singleton $\e$}
Consider now the case where $\statesp = \R^k$, $\D = \R$ and 
$T_{\bz}(\mathbf x,\e) = (x_1+z_1a_1\e,x_2+z_2a_2\e,\ldots,x_k+z_ka_k\e)$ 
where, for $i=1,\ldots,k$, $z_i=\pm 1$, and $a_i>0$. In this case $\Y = [0,\infty)$. 
Suppose that $g$ is a density on $\Y$.

\begin{theorem}
 Suppose that $\pi$ is bounded and positive on every compact subset of $\R^k$ and that $g$ is 
 positive on every compact subset of $(0,\infty)$. Then the chain is $\l$-irreducible, aperiodic. 
 Moreover every nonempty compact subset of $\R^k$ is small.
\end{theorem}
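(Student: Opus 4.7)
The plan is to establish Lebesgue-irreducibility, aperiodicity, and smallness of compact sets in sequence, all driven by a single uniform lower bound on a $2k$-step transition density. A single step of additive TMCMC with singleton $\e$ proposes $\bm x \mapsto \bm x + \bm s \odot \bm a\,\e$, where $\bm s \in \{-1,+1\}^k$ is the sign vector associated with the selected move-type index $I$ (each chosen with positive probability $p(I)$) and $\e \sim g$. Because the Jacobian is $1$, the acceptance ratio is $\min\bigl(1,\tfrac{p(I^c)}{p(I)}\tfrac{\pi(\bm x')}{\pi(\bm x)}\bigr)$, which is strictly positive wherever $\pi$ is. Since $\pi$ is positive and bounded on compacts and $g$ is positive on compact subsets of $(0,\infty)$, every factor entering a multi-step transition density admits a uniform positive lower bound along trajectories confined to a fixed compact region.

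For $\lambda$-irreducibility I first pick $k$ sign vectors $\bm s_1,\ldots,\bm s_k$ such that the matrix $M := [\,\bm s_1 \odot \bm a \,\mid \cdots\mid\, \bm s_k \odot \bm a\,]$ is invertible (always possible since $a_i>0$; for instance, take $(\bm s_j)_i = +1$ for $i \leq j$ and $-1$ otherwise, which after scaling by $\mathrm{diag}(\bm a)$ has nonzero determinant). I then consider a $2k$-step trajectory using $\bm s_j$ at step $2j-1$ and $-\bm s_j$ at step $2j$, giving net displacement
\[ \sum_{j=1}^{k}(\e_{2j-1}-\e_{2j})\,\bm s_j\odot\bm a \;=\; M\pmb\alpha,\qquad \alpha_j := \e_{2j-1}-\e_{2j}\in\mathbb R.\]
Changing variables from $(\e_{2j-1},\e_{2j})$ to $(\alpha_j,\beta_j)$ with $\beta_j=\e_{2j-1}+\e_{2j}$, the fiber over a target $\bm y = \bm x + M\pmb\alpha$ is $\{\pmb\beta:\beta_j>|\alpha_j|,\,j=1,\ldots,k\}$. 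Restricting each $\beta_j$ to a bounded slab $(|\alpha_j|+\eta,|\alpha_j|+C)$ keeps every $\e_j$ in a compact subset of $(0,\infty)$ and every one of the $2k$ intermediate states in a fixed compact set, so $g(\e_j)$, the move-type probabilities, and the $2k$ acceptance factors are all uniformly bounded below by positive constants. Integrating the resulting bounded-below integrand over the slab yields the pointwise lower bound $p^{(2k)}(\bm x,\bm y)\geq c(\bm x,\bm y)>0$ for every $\bm y\in\mathbb R^k$, and consequently $P^{2k}(\bm x,A)>0$ whenever $\lambda(A)>0$.

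For aperiodicity I observe that, since $\pi$ is not identically constant, the acceptance ratio drops strictly below $1$ on a positive-probability set of $(I,\e)$ for $\bm x$ ranging over a positive-$\lambda$ subset of $\statesp$; hence $K(\bm x,\{\bm x\})>0$ on a set of positive Lebesgue measure, which precludes any cyclic decomposition of $\statesp$ into $d\geq 2$ classes. For smallness of a compact $E\subset\mathbb R^k$ I reuse the $2k$-step construction and note that the constant $c(\bm x,\bm y)$ above may be chosen uniform in $(\bm x,\bm y)\in E\times V$ for any prescribed small open ball $V$: with $\bm x\in E$, $\bm y\in\bar V$, and $\pmb\beta$ in the bounded slab, every intermediate state, every $\e_j$, and every $\alpha_j$ lies in a single compact set determined by $E$, $V$, and the slab. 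This delivers the minorization $P^{2k}(\bm x,A)\geq c\,\lambda(A\cap V)$ for all $\bm x\in E$ and measurable $A$, identifying $E$ as small.

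The main obstacle will be the careful bookkeeping behind the uniform lower bound on $p^{(2k)}(\bm x,\bm y)$: verifying that the change of variables $(\e_1,\ldots,\e_{2k}) \leftrightarrow (\pmb\alpha,\pmb\beta)$ has constant positive Jacobian, that along the chosen path every intermediate state remains in a fixed compact set so $\pi$ is uniformly bounded away from zero and infinity, and that each of the $2k$ acceptance factors and move-type probabilities contributes a uniform positive constant. Each individual check is elementary, but the $k$-dimensional fiber integral against a potentially non-smooth integrand (since $\pi$ is only assumed bounded, not continuous) requires care; once the uniform density lower bound is in place, all three conclusions follow directly.
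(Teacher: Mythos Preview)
Your argument is correct, but it proceeds along a somewhat different route from the paper's. The paper works explicitly in dimension $k=2$, uses only $k$ steps (so $K^2$), and handles the target region by a combinatorial partition: it lists the eight sign matrices $M_i,\tilde M_i$ corresponding to the possible two-step sign patterns, decomposes $A\cap C$ into the pieces reachable via each pattern, and lower-bounds each piece directly. Smallness is proved first, and $\lambda$-irreducibility and aperiodicity are then read off as immediate corollaries of the minorization $K^2(\bx,\cdot)\geq c\,\lambda_C(\cdot)$.

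Your construction instead doubles the number of steps to $2k$, pairing each $\bm s_j$ with $-\bm s_j$ so that the net displacement along the $j$-th direction is $\alpha_j=\e_{2j-1}-\e_{2j}\in\mathbb R$, and then performs a single global change of variables $(\e_{2j-1},\e_{2j})\mapsto(\alpha_j,\beta_j)$ with a one-parameter fiber in $\beta_j$. This trades the paper's case analysis over $2^k$ sign-wedges for an analytic integral over a $k$-dimensional slab, which is arguably cleaner in general dimension (the paper only gestures at general $k$ via ``suitably defined rotational matrices''). The cost is the factor of two in the number of steps and the need to argue aperiodicity separately; your aperiodicity argument via $K(\bx,\{\bx\})>0$ is fine, though the premise ``$\pi$ is not identically constant'' should be justified (it holds because $\pi$ is an integrable density on $\mathbb R^k$). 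The paper's route---aperiodicity from the fact that the minorization gives access to any positive-measure set in both $k$ and $k+1$ steps---avoids that side remark entirely.
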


\begin{proof}
Without loss we may assume that $a_i=1;$ $i=1,\ldots,k$. 
For notational convenience we shall prove the theorem for $k = 2$. The general case can be 
seen to hold with suitably defined `rotational' matrices on $\R^k$ similar to \eqref{formula:rotmats1}.

Suppose $E$ is a nonempty compact subset of $\R^k$. Let $C$ be a compact rectangle whose sides are parallel to the diagonals $\{(x,y) ~:~ |y| = |x|\}$ and containing $E$ such that $\l(C) > 0$. We shall show that $E$ is small, i.e., $\exists ~c > 0$ such that 
\[ K^2(\bx,A) \geq c \l_C(A) \qquad \forall A \in \mathcal B(\R^2) \textrm{ and } \forall x \in E.\]
It is clear that the points reachable from $\bx$ \emph{in two steps} are of the form
\[\begin{pmatrix} x_1 \pm \e_1 \pm \e_2 \\ x_2 \pm \e_1 \pm \e_2\end{pmatrix},\qquad \e_1 \geq 0, \e_2 \geq 0 \]
Thus, if we define the matrices
\begin{equation}\label{formula:rotmats1}
\begin{split}
 & M_1 = \begin{pmatrix} 1 & 1 \\ 1 & -1 \end{pmatrix} \quad M_2 = \begin{pmatrix} -1 & 1 \\ 1 & 1 \end{pmatrix} \quad M_3 = \begin{pmatrix} 1 & -1 \\ -1 & -1 \end{pmatrix} \quad M_4 = \begin{pmatrix} -1 & -1 \\ -1 & 1 \end{pmatrix} \\
& \tilde M_1 = \begin{pmatrix} 1 & 1 \\ -1 & 1 \end{pmatrix} \quad \tilde M_2 = \begin{pmatrix} 1 & -1 \\ 1 & 1 \end{pmatrix} \quad \tilde M_3 = \begin{pmatrix} -1 & 1 \\ -1 & -1 \end{pmatrix} \quad \tilde M_4 = \begin{pmatrix} -1 & -1 \\ 1 & -1 \end{pmatrix}
\end{split}
\end{equation}
then the points reachable from $\bx$ \emph{in two steps}, other than the points lying on the diagonals passing through $\bx$ itself, are of the form
\[ \bx + M_i\left(\begin{smallmatrix} \e_1 \\ \e_2\end{smallmatrix}\right) \quad\textrm{ and }\quad  \bx + \tilde M_i\left(\begin{smallmatrix} \e_1 \\ \e_2\end{smallmatrix}\right), \quad \e_1 > 0, \e_2 > 0,~i=1,\ldots,4.\]

Define
\[ m = \inf_{\by \in C}\pi(\by) > 0 \qquad M = \sup_{\by \in C}\pi(\by) < \infty \qquad a = \inf_{0 < \e < R}g(\e) > 0\]
where $R$ is the length of the diagonal of the rectangle $C$\footnote{Actually $R/\sqrt{2}$ suffices.}.
Fix an element $\bx \in E$. For any set $A \in \mathcal B(\R^2)$, let $A^* = A\cap C$ and define,
\begin{equation}
\begin{split}
A_i & = \{ \be \in (0,\infty)^2 ~:~ \bx + M_i \be\in A^* \} \\
\tilde A_i & = \{ \be \in (0,\infty)^2 ~:~ \bx + \tilde M_i \be\in A^* \}
\end{split}
\end{equation}
The need for defining such sets illustrated in the following example: to make a transition from the state $\bx$ to a state in $A^*$ in two steps, first making a forward transition in both coordinates and then a forward transition in first coordinate and a backward transition in the second coordinate is same as applying the transformation $\bx \to \bx + M_1\be$ for some $\be \in A_1$ in two steps, i.e. first 
\[\bx \to \bx + M_1(\e_1,0)^T = \bx + (\e_1,\e_1)^T \quad\textrm{ then } \quad \bx + M_1(\e_1,\e_2)^T \to \bx + M_1\be\]
Also note that for any $\be = (\e_1,\e_2)\in A_i$, $A^* \subset C$ implies that the intermediate point $\bx + M_i(\e_1,0)^T \in C$ and similarly for $\tilde A_i~(i=1,\ldots,4)$.
Now, with ${\underline p}$ and $\bar p$ as the minimum and maximum of the move probabilities. 
\begin{eqnarray}\label{eqn:smallsetadditive}
 & & K^2(\bx,A) \geq K^2(\bx,A^*) \nonumber \\
& \geq & {\underline p}^2\sum_{i=1}^4 \dint_{A_i} g(\e_1)g(\e_2)\min\left\{ \frac{{\underline p}\pi(\bx + M_i(\e_1,0)^T)}{{\bar p}\pi(\bx)},1\right\}\min\left\{ \frac{{\underline p}\pi(\bx + M_i(\e_1,\e_2)^T)}{{\bar p}\pi(\bx + M_i(\e_1,0)^T)},1\right\}d\e_1d\e_2 \nonumber\\
& + & {\underline p}^2\sum_{i=1}^4 \dint_{\tilde A_i} g(\e_1)g(\e_2)\min\left\{ \frac{{\underline p}\pi(\bx + \tilde M_i(\e_1,0)^T)}{{\bar p}\pi(\bx)},1\right\}\min\left\{ \frac{{\underline p}\pi(\bx + \tilde M_i(\e_1,\e_2)^T)}{{\bar p}\pi(\bx + \tilde M_i(\e_1,0)^T)},1\right\}d\e_1d\e_2 \nonumber \\
& \geq & {\underline p}^2 a^2\left(\min\left\{\dfrac{\underline p m}{\bar p M},1\right\}\right)^2\left(\sum_{i=1}^4\l(A_i) + \sum_{i=1}^4\l(\tilde A_i) \right) \nonumber \\
& = & {\underline p}^2 a^2\left(\min\left\{\dfrac{\underline p m}{\bar p M},1\right\}\right)^2 \times 2 \times \sum_{i=1}^4\l(A_i)
\end{eqnarray}
Since $(\e_1,\e_2) \in A_i \iff (\e_2,\e_1) \in \tilde A_i$, so that, $\l(A_i) = \l(\tilde A_i)$.
Now notice that, if we define for $i=1,\ldots,4$
\[ f_i : (0,\infty)^2 \to \R^2 \ni \be \mapsto \bx + M_i\be \]
 and 
\[ A_\bx = \{ (\e,0)^T ~:~ \e > 0, (x_1 \pm \e, x_2 \pm \e) \in A^* \}\]
then,
\[A^* = \bigcup_{i=1}^4f_i(A_i\cup A_x) \quad\imply\quad \l(A^*) = \sum_{i=1}^4f_i(A_i) \quad = \quad 2 \times \sum_{i=1}^4\l(A_i),\]
since, $f_i(A_i)$'s are pairwise disjoint, $\l(f_i(A_x)) = 0$ and $\l(f_i(A_i)) = 2\l(A_i)$ for $1\leq i \leq 4$. It follows from \eqref{eqn:smallsetadditive} that
\[ K^2(\bx,A) \quad\geq \quad {\underline p}^2 a^2\left(\min\left\{\dfrac{\underline p m}{\bar p M},1\right\}\right)^2 \l(A^*) \quad = \quad c\l_C(A) \]
where $c = {\underline p}^2 a^2\left(\min\left\{\dfrac{\underline p m}{\bar p M},1\right\}\right)^2 > 0$. \\ This completes the proof that $E$ is small.

That the chain is irreducible, follows easily, for any $\bx$, the set $\{\bx\}$ is a compact set and for a measurable set $A$ with $\l(A) > 0$ we may choose $C$ in the first part of the proof such that $\l(C\cap A) > 0$. Now,
\[ K^2(\bx,A) \quad \geq \quad c\l(C\cap A) > 0\]
Also aperiodicity follows trivially from the observation that any set with positive $\l$-measure can be accessed in at most 2 steps.
\end{proof}


\section{General TMCMC algorithm with single $\e$ and dependent $\bz$}
\label{sec:tmcmc_dependent_z}
Also, let 
Let $h_1(\bp)$, $h_2(\bq)$
be the specified joint distributions of $\bp$ and $\bq$ induced by the Gaussian distributions
of $\bw_1,\bw_2,\bw_3$, and
let $P(\bz\vert \bp,\bq)=\prod_{i=1}^kf_i(z_i\vert p_i,q_i)$ denote the conditional probability
of $\bz$, given $\bp,\bq$, where $f_i(\cdot\vert p_i,q_i)$ is the conditional
probability of $z_i$ given $p_i$ and $q_i$. 
Then the general TMCMC algorithm with singleton $\e$ and dependent $\bz$ is given as follows.

\begin{algo}\label{algo:GTM:multivar3} \topline General TMCMC algorithm based on single $\e$
and dependent $\bz$.
\botline \normalfont \ttfamily
\begin{itemize}
 \item Input: Initial value $\supr{\bm x}{0}$, and number of iterations $N$. 
\begin{enumerate}
 \item For $t=0,\ldots,N-1$
\begin{enumerate}
 \item Generate $\bw_1\sim N_k(\bmu_1,\bSigma_1)$, $\bw_2\sim N_k(\bmu_2,\bSigma_2)$, and $\bw_3\sim N_k(\mu_3,\bSigma_3)$. 
 \item For $i=1,\ldots,k$, set $p_i=\exp\left(w_{1i}\right)/\sum_{j=1}^3\exp\left(w_{ji}\right)$, 
 $q_i=\exp\left(w_{2i}\right)/\sum_{j=1}^3\exp\left(w_{ji}\right)$, and
 $1-p_i-q_i=\exp\left(w_{3i}\right)/\sum_{j=1}^3\exp\left(w_{ji}\right)$.
 \item Generate $\e \sim g(\cdot)$ and an index $i \sim \mathcal M(1;p_1,\ldots,p_{3^k-1})$ independently.
 \end{enumerate}
 \item \[ \bm x' = T_{\bz_i}(\supr{\bm x}{t}, \e) \quad \textrm{ and } 
 \quad \alpha(\supr{\bm x}{t}, \e) = \min\left(1, \dfrac{P(\bz^c_i\vert\bp,\bq)}{P(\bz_i\vert\bp,\bq)} 
 ~\dfrac{\pi(\bm x')}{\pi(\supr{\bm x}{t})} 
 ~\left|\frac{\partial (T_{\bz_i}(\supr{\bm x}{t}, \e),\e)}{\partial(\supr{\bm x}{t}, \e)}\right| \right)\]
\item Set \[ \supr{\bm x}{t+1}= \left\{\begin{array}{ccc}
 \bm x' & \textsf{ with probability } & \a(\supr{\bm x}{t},\e) \\
 \supr{\bm x}{t}& \textsf{ with probability } & 1 - \a(\supr{\bm x}{t},\e)
\end{array}\right.\]
\end{enumerate}
\item End for
\end{itemize}
\botline \rmfamily
\end{algo}

\section{Proof of detailed balance for TMCMC with dependent $\bz$}
\label{sec:detailed_balance_dependent_z}

Let $\by = T_{\bz}(\bx,\e) \in T_{\bz}(\bx,\Y)$, 
then $\bx = T_{\bz^c}(\by,\e)$. 
The kernel $K$ satisfies,
\begin{eqnarray*}
 \pi(\bx)K(\bx \to \by) & = & \pi(\bx)h_1(\bp)h_2(\bq)
 P(\bz\vert\bp,\bq)~g(\e)\min\left\{1,\dfrac{P(\bz^c\vert\bp,\bq) \pi(\by)}
 {P(\bz\vert\bp,\bq)\pi(\bx)}J_{\bz}(\bx,\e)\right\} \\
& = & h_1(\bp)h_2(\bq)g(\be)\min\left\{\pi(\bx)P(\bz\vert\bp,\bq),\pi(\by)P(\bz^c\vert\bp,\bq)
J_{\bz}(\bx,\e)\right\}
\end{eqnarray*}
and
\begin{eqnarray*}
 \pi(\by)K(\by \to \bx) & = & \pi(\by)h_1(\bp)h_2(\bq)P(\bz^c\vert\bp,\bq)g(\e)J_{\bz}(\bx,\e)
 \min\left\{1,\dfrac{P(\bz\vert\bp,\bq)
 \pi(\bx)}{P(\bz^c\vert\bp,\bq)\pi(\by)}J_{\bz^c}(\by,\e)~\right\} \\
& = & h_1(\bp)h_2(\bq)g(\e)\min\left\{\pi(\by)P(\bz^c\vert\bp,\bq)J_{\bz}(\bx,\e),\pi(\bx)P(\bz\vert\bp,\bq)
\right\}
\end{eqnarray*}

\section{Improved acceptance rates of additive TMCMC with singleton $\e$ compared 
to joint updating using RWMH}
\label{sec:accept_rate}

The joint RWMH algorithm generates $\be=(\e_1,\ldots,\e_k)'$ independently from $N(0,1)$,
and then uses the transformation of the form $x'_i=x_i+a_i\e_i$, where $a_i>0$ are
appropriate scaling constants. For large $k$, the so-called ``curse of dimensionality"
can force the acceptance rate to be close to zero. 
On the other hand, the additive-transformation based TMCMC also updates 
$(x_1,\ldots,x_k)$ simultaneously in a single block, but instead of using $k$ different $\e_i$,
it uses a single $\e$ for updating all the $x_i$ variables. In other words, for TMCMC
based on additive transformation $\be$ is of the form 
$\be=(\pm\e,\ldots,\pm\e)'$, where $\e\sim N(0,1)\mathbb I_{\{\e>0\}}$.
Thus, relative to RWMH, the dimension in the TMCMC case is effectively reduced to 1, 
avoiding the curse of dimensionality. Thus, it is expected that additive TMCMC will
have a much higher acceptance rate than RWMH.
In this section we formalize and compare the issues related to acceptance rates of additive TMCMC
and RWMH.

\subsection{Discussion on optimal scaling and optimal acceptance rate of additive TMCMC and RWMH}
\label{subsec:optimal_scaling}

A reasonable approach to compare the acceptance rates of additive TMCMC and RWMH is to 
develop the optimal scaling theory for additive TMCMC, obtain
the optimal acceptance rate, and then compare the latter with the optimal acceptance rates for RWMH,
which are already established in the MCMC literature. 
Indeed, optimal scaling and optimal acceptance rate of additive TMCMC and comparison with
those of RWMH is the subject of \ctn{Dey13b}, where it is shown that additive TMCMC has
a much higher optimal acceptance rate compared to RWMH. Before we summarize the results
of \ctn{Dey13b} we first provide a brief overview of optimal scaling and optimal acceptance rate of RWMH.

\subsubsection{Brief overview of optimal scaling and optimal acceptance rate for RWMH}  
\label{subsubsec:optimal_rwmh}

Roughly, the optimal random walk proposal variance, represented as an inverse function of the dimension $k$,
is the one that maximizes the speed of 
convergence to the stationary distribution of a relevant
diffusion process to which a `sped-up' version of RWMH weakly converges as the dimension $k$
increases to infinity. The optimal acceptance rate corresponds to the optimal proposal variance.
Under various assumptions on the form of the target distribution $\pi$, ranging from the
$iid$ assumption (\ctn{Roberts97}), through independent but non-identical set-up (\ctn{Bedard07}),
to a more general dependent structure (\ctn{Mattingly11}), the optimal acceptance rate
turns out to be 0.234. 

\subsubsection{Optimal scaling and optimal acceptance rate for additive TMCMC}
\label{subsubsec:optimal_tmcmc}

In \ctn{Dey13b} it has been proved
in the case of additive TMCMC, assuming $p_i=q_i=1/2$, 
that the optimal acceptance rate, as $k\rightarrow\infty$, is 0.439 
under the set-ups ($iid$, independent but non-identical, and dependent) for which the optimal 
acceptance rate for RWMH has been studied and established to be 0.234. 
Thus, the optimal acceptance rate for additive TMCMC is much higher than that of RWMH. 
The optimal scalings, that is, the optimal values of the scales $a_1,\ldots,a_k$ are also available
using the optimal scaling theory. 
As shown in \ctn{Dey13b}, all these results for additive TMCMC and RWMH 
remain true even in all the aforementioned set-ups if some of the co-ordinates of $\bx$ are
updated at random, conditioning on the remaining co-ordinates.   

\subsection{Comparison between the asymptotic forms of the acceptance rates of additive 
TMCMC and RWMH for strongly log-concave target densities}
\label{subsec:lower_upper_bound}
The results on optimal scaling and optimal acceptance rate discussed in Sections \ref{subsubsec:optimal_rwmh}
and \ref{subsubsec:optimal_tmcmc} are available only for special forms of the target distribution $\pi$. 
In this section we obtain the asymptotic forms of the acceptance rates associated with RWMH and additive TMCMC
assuming that the target density is strongly log-concave. 
In particular, under suitable conditions 
we show that as the dimension increases,
the acceptance rate of RWMH converges to zero at a much faster rate than that of additive TMCMC.

Assuming without loss of generality that the marginal variances of the target density $\pi$ are all unity 
(achieved after suitable scaling perhaps), for RWMH we consider the following proposed value $\bx'$ given
the current value $\bx$: $\bx'=\bx+\be$,
where $\be\sim N_k(\bzero,\bI_k)$. On the other hand, for additive TMCMC, we consider $\bx'=\bx+\e\bdelta$,
where $\epsilon \sim N(0,1)\mathbb{I}(\epsilon > 0)$ and the components $\delta_i$ 
of $\boldsymbol\delta$ are $iid$ taking values $\pm 1$ with probability $1/2$ each. 

To proceed we consider the following form of acceptance rate for our asymptotic framework.
Letting $R(\bx'|\bx)$ denote the acceptance probability of $\bx'$ given the current value $\bx$,
and letting $U\sim Uniform(0,1)$, the acceptance rate is given by
\begin{align}
AR &= \int R(\bx'|\bx)q(\bx'|\bx)\pi(\bx)d\bx d\bx'\notag
\\[1ex]
&=\int Pr\left(U<R(\bx'|\bx)\right)q(\bx'|\bx)\pi(\bx)d\bx d\bx'\notag
\\[1ex]
&=\int \left[\int Pr\left(U<R(\bx'|\bx)\right)q(\bx'|\bx)d\bx'\right]\pi(\bx)d\bx\notag
\\[1ex]
&=\int \left[\int_0^1 Pr\left(R(\bx'|\bx)>u\right)du\right]\pi(\bx)d\bx
\label{eq:ineq4}
\end{align}
In the above formula for acceptance rate note that $Pr\left(R(\bx'|\bx)>u\right)\rightarrow 1$ as $u\rightarrow 0$
and $Pr\left(R(\bx'|\bx)>u\right)\rightarrow 0$ as $u\rightarrow 1$.
Hence, given any $\eta_1>0,\eta_2>0$, we can choose $\psi_1,\psi_2\in (0,1)$ sufficiently small such that 
$\int_0^{\psi_1}Pr\left(R(\bx'|\bx)>u\right)du<\eta_1$ and $\int_{1-\psi_2}^1 Pr\left(R(\bx'|\bx)>u\right)du<\eta_2$.
Hence, re-writing (\ref{eq:ineq4})
as
\begin{eqnarray}
AR&=&\int \left[\int_0^{\psi_1} Pr\left(R(\bx'|\bx)>u\right)du\right]\pi(\bx)d\bx
+\int \left[\int_{\psi_1}^{1-\psi_2} Pr\left(R(\bx'|\bx)>u\right)du\right]\pi(\bx)d\bx\nonumber\\
&&\quad\quad\quad\quad+\int \left[\int_{1-\psi_2}^{1} Pr\left(R(\bx'|\bx)>u\right)du\right]\pi(\bx)d\bx,\nonumber
\end{eqnarray}
we find that the first and the third term on the right hand side are negligible for any algorithm. So,
for the purpose of comparing algorithms with respect to their acceptance rates, we consider only the middle term; 
in all that follow we denote 
\begin{equation}
AR = \int \left[\int_{\psi_1}^{1-\psi_2} Pr\left(R(\bx'|\bx)>u\right)du\right]\pi(\bx)d\bx.
\label{eq:ineq5}
\end{equation}

For our purpose, we consider a target density $\pi(\mathbf{x})$ of $k$ variables that is strongly log-concave, 
that is,
\begin{equation}\label{eqn:stronglyconvex}
 -M_k \mathbf{I}_k \leq \nabla^2 \log \pi(\mathbf{x}) \leq -m_k \mathbf{I}_k,
\end{equation}
where we assume that $M_k=c_k+m_k$, with $m_k,~c_k>0$ for every $k$. We further assume that 
$m_k\rightarrow\infty$, and the sequence $\{c_k\}$ is such that $c_k/m_k \rightarrow 0$ as $k\rightarrow\infty$. 
Then clearly, $M_k\asymp m_k$, meaning $M_k/m_k\rightarrow 1$ as $k\rightarrow\infty$. In fact, we assume that
$M_k/m_k$ approaches 1 at a sufficiently fast rate, so that
$k\left\vert\frac{M_k}{m_k}-1\right\vert\rightarrow 0$. For our purpose we assume that 
$c_k=O(k^s); s\geq 1$ and 
$m_k=O(k^t);t>s+1\geq 2$,
so that $M_k=O(k^t)$. It is easy to verify that these choices satisfy the above conditions.

It is important to note that our assumption $m_k,M_k\rightarrow\infty$ need not hold for all
strongly log-concave distributions. For instance,
when $\pi$ is the $iid$ product of standard normals, that is, when 
$\bx\sim N_k\left(\bzero,\bI_k\right)$ under $\pi$, $\nabla^2 \log \pi(\mathbf{x})=\bI_k$.
In this case, $m_k=M_k=1$ for every $k\geq 1$. In general, even if $m_k$ and $M_k$ remains finite
as $k$ grows to infinity, our proofs remain valid provided that $M_k\asymp m_k$ and 
$k\left\vert\frac{M_k}{m_k}-1\right\vert\rightarrow 0$. The case of $\pi$ being an $iid$ product
of standard normals clearly satisfies the above conditions.

\subsubsection{Asymptotic form of the acceptance rate for RWMH}
\label{subsubsec:asymp_rwmh}

Let $\bx^*$ denote the mode of the target density $\pi(\cdot)$.
Then for every $r \in (0,1),$
\begin{eqnarray*}
& & Pr~(R(\mathbf{x}'| \mathbf{x}) < r)  = Pr~(\pi(\mathbf{x}')/\pi(\mathbf{x}) < r) = Pr~(\log \pi(\mathbf{x}') - \log \pi(\mathbf{x})  < \log r )\\
& = &  Pr~(\left[\log \pi(\mathbf{x}') - \log \pi(\mathbf{x^*})\right]-\left[\log \pi(\mathbf{x}') - \log \pi(\mathbf{x^*})\right]  < \log r )\\
& = & Pr~\left(\left[\nabla \log \pi(\mathbf{x^*})^T (\mathbf{x}' - \mathbf{x^*}) + (1/2)(\mathbf{x}'-\mathbf{x^*})^T \nabla^2 \log \pi(\boldsymbol{\xi_1(\mathbf x',\mathbf x^*)})(\mathbf{x}'-\mathbf{x^*})\right]\right.\\
&&\quad\quad\left.-\left[\nabla \log \pi(\mathbf{x^*})^T (\mathbf{x} - \mathbf{x^*}) + (1/2)(\mathbf{x}-\mathbf{x^*})^T \nabla^2 \log \pi(\boldsymbol{\xi_2(\mathbf x,\mathbf x^*)})(\mathbf{x}-\mathbf{x^*})\right]
< \log r\right), \\
&&\quad\textrm{~for some } \boldsymbol\xi_1(\mathbf x',\mathbf x^*), \boldsymbol\xi_2(\mathbf x,\mathbf x^*)\ \ \mbox{depending upon}\ \ (\mathbf x',\mathbf x^*) \ \ \mbox{and} (\mathbf x,\mathbf x^*) \ \ \mbox{respectively};\\
& = & Pr~\left(\left[(1/2)(\mathbf{x}'-\mathbf{x^*})^T \nabla^2 \log \pi(\boldsymbol{\xi_1(\mathbf x',\mathbf x^*)})(\mathbf{x}'-\mathbf{x^*})\right]\right.\\
&&\quad\quad\left. -\left[(1/2)(\mathbf{x}-\mathbf{x^*})^T \nabla^2 \log \pi(\boldsymbol{\xi_2(\mathbf x',\mathbf x^*)})(\mathbf{x}-\mathbf{x^*})\right]<\log r\right)\\
&&\quad\quad\quad\quad\quad\mbox{since}\ \ \nabla \log \pi(\mathbf{x^*})=\bzero.
\end{eqnarray*}
Thus from the assumption in \eqref{eqn:stronglyconvex}, and noting that
$(\mathbf{x}'-\mathbf{x^*})^T(\mathbf{x}'-\mathbf{x^*})=(\mathbf{x}-\mathbf{x^*})^T(\mathbf{x}-\mathbf{x^*})
+2(\mathbf{x}-\mathbf{x^*})^T\boldsymbol\epsilon+\boldsymbol\epsilon^T\boldsymbol\epsilon$
it follows that 
\begin{equation}\label{eqn:bound1}
 \begin{split}
  & Pr~\left(\frac{(M_k-m_k)}{2}(\mathbf{x}-\mathbf{x^*})^T (\mathbf{x}-\mathbf{x^*}) 
  -m_k(\mathbf{x}-\mathbf{x^*})^T\boldsymbol\epsilon - \frac{m_k}{2}\boldsymbol\epsilon^T\boldsymbol\epsilon
  < \log r\right) \\
 & \leq Pr~(R(\mathbf{x}'| \mathbf{x}) < r) \\
 & \leq 
   Pr~\left(-\frac{(M_k-m_k)}{2}(\mathbf{x}-\mathbf{x^*})^T (\mathbf{x}-\mathbf{x^*}) 
  -M_k(\mathbf{x}-\mathbf{x^*})^T\boldsymbol\epsilon - \frac{M_k}{2}\boldsymbol\epsilon^T\boldsymbol\epsilon
  < \log r\right);
 \end{split}
\end{equation}
so that
\begin{equation}\label{eqn:new_bound2}
 \begin{split}
  & Pr~\left(\frac{(M_k-m_k)}{2}(\mathbf{x}-\mathbf{x^*})^T (\mathbf{x}-\mathbf{x^*}) 
  -m_k(\mathbf{x}-\mathbf{x^*})^T\boldsymbol\epsilon - \frac{m_k}{2}\boldsymbol\epsilon^T\boldsymbol\epsilon
  > \log r\right) \\
 & \geq Pr~(R(\mathbf{x}'| \mathbf{x}) > r) \\
 & \geq 
   Pr~\left(-\frac{(M_k-m_k)}{2}(\mathbf{x}-\mathbf{x^*})^T (\mathbf{x}-\mathbf{x^*}) 
  -M_k(\mathbf{x}-\mathbf{x^*})^T\boldsymbol\epsilon - \frac{M_k}{2}\boldsymbol\epsilon^T\boldsymbol\epsilon
  > \log r\right),
 \end{split}
\end{equation}
Hence, using (\ref{eq:ineq5}) it can be seen that the acceptance rate is bounded above and below as follows
\begin{align}
& \int \left[\int_{\psi_1}^{1-\psi_2} \left\{ \int_{A^k_{2,\be,u}}\frac{1}{(2\pi)^{k/2}}\exp\left\{-\frac{1}{2}\be^T\be\right\}d\be \right\}du\right]\pi(\bx)d\bx\notag
\\[1ex]
&\geq AR^{(RWMH)}\label{eq:ar_rwmh} 
\\[1ex]
&\geq \int \left[\int_{\psi_1}^{1-\psi_2} \left\{ \int_{A^k_{1,\be,u}}\frac{1}{(2\pi)^{k/2}}\exp\left\{-\frac{1}{2}\be^T\be\right\}d\be \right\}du\right]\pi(\bx)d\bx,
\notag
\end{align}
where
$$A^k_{1,\be,u}=\left\{\mathbf{x}: -\frac{(M_k-m_k)}{2}(\mathbf{x}-\mathbf{x^*})^T (\mathbf{x}-\mathbf{x^*})
-M_k(\mathbf{x}-\mathbf{x^*})^T\boldsymbol\epsilon - \frac{M_k}{2}\boldsymbol\epsilon^T\boldsymbol\epsilon
> \log u\right\}$$ and
$$A^k_{2,\be,u}=\left\{\mathbf{x}: \frac{(M_k-m_k)}{2}(\mathbf{x}-\mathbf{x^*})^T (\mathbf{x}-\mathbf{x^*})
-m_k(\mathbf{x}-\mathbf{x^*})^T\boldsymbol\epsilon - \frac{m_k}{2}\boldsymbol\epsilon^T\boldsymbol\epsilon
> \log u\right\}.$$
Now, note that for some $\bxi(\bx,\bx^*)$ depending upon $\bx$ and $\bx^*$,
\begin{eqnarray}
\pi(\bx)
 &=&\exp\left\{\log\pi(\mathbf{x})\right\}d\mathbf{x}\nonumber\\
 &=& \exp\left\{\log\pi(\mathbf{x^*})
 +\frac{1}{2}(\mathbf{x}-\mathbf{x^*})^T\nabla^2\log\pi(\boldsymbol\xi(\mathbf x,\mathbf x^*))
 (\mathbf{x}-\mathbf{x^*})\right\}, \nonumber\\
\end{eqnarray}
so that the inequalities related to strong convexity, given by (\ref{eqn:stronglyconvex}) yield
\begin{equation*}
\frac{(2\pi)^{k/2}}{m^k_k}\pi(\mathbf x^*)
\frac{m^k_k}{(2\pi)^{k/2}} \exp\left\{-\frac{m_k}{2}(\mathbf{x}-\mathbf{x^*})^T(\mathbf{x}-\mathbf{x^*})\right\}
\geq \pi(\bx)
\end{equation*}
\begin{equation}
\geq \frac{(2\pi)^{k/2}}{M^k_k}\pi(\mathbf x^*)
\frac{M^k_k}{(2\pi)^{k/2}} \exp\left\{-\frac{M_k}{2}(\mathbf{x}-\mathbf{x^*})^T(\mathbf{x}-\mathbf{x^*})\right\}
\label{eq:eq2}
\end{equation}

Using the lower bound of $\pi(\bx)$ given by (\ref{eq:eq2}) and Fubini's theorem, the lower bound
of the acceptance rate given by (\ref{eq:ar_rwmh}) can be further bounded below as
\begin{align}
AR^{(RWMH)} &\geq \int \int_{\psi_1}^{1-\psi_2}  \int_{A^k_{1,\be,u}}\frac{1}{(2\pi)^{k/2}}\exp\left\{-\frac{1}{2}\be^T\be\right\}\pi(\bx)~d\bx~du~d\be \notag
\\[1ex]
&\geq \frac{(2\pi)^{k/2}}{M^k_k}\pi(\mathbf x^*)\int \int_{\psi_1}^{1-\psi_2}  \int_{A^k_{1,\be,u}}\frac{1}{(2\pi)^{k/2}}\exp\left\{-\frac{1}{2}\be^T\be\right\}\notag
\\[1ex]
&\quad\quad\quad\quad\times\frac{M^k_k}{(2\pi)^{k/2}} \exp\left\{-\frac{M_k}{2}(\mathbf{x}-\mathbf{x^*})^T(\mathbf{x}-\mathbf{x^*})\right\}~d\bx~du~d\be\notag
\\[1ex]
&\geq \frac{(2\pi)^{k/2}}{M^k_k}\pi(\mathbf x^*)\inf_{u\in (\psi_1,1-\psi_2)}Pr~(A^k_{1,\be,u}),
\label{eq:lowerbound1}
\end{align}
where $Pr~(A^k_{1,\be,u})$ must be calculated with respect to $\be\sim N_k(\bzero,\bI_k)$, and
independently, $\bx-\bx^*\sim N_k(\bzero,M^{-1}_k\bI_k)$. 

Similarly, using the upper bound of $\pi(\bx)$ given by (\ref{eq:eq2}) the upper bound
of the acceptance rate given by (\ref{eq:ar_rwmh}) can be further bounded above as
\begin{align}
AR^{(RWMH)} &\leq \int \int_{\psi_1}^{1-\psi_2}  \int_{A^k_{2,\be,u}}\frac{1}{(2\pi)^{k/2}}\exp\left\{-\frac{1}{2}\be^T\be\right\}\pi(\bx)~d\bx~du~d\be \notag
\\[1ex]
&= \frac{(2\pi)^{k/2}}{m^k_k}\pi(\mathbf x^*)\int_{\psi_1}^{1-\psi_2} Pr~(A^k_{2,\be,u})du\notag
\\[1ex]
&\leq \frac{(2\pi)^{k/2}}{m^k_k}\pi(\mathbf x^*)\sup_{u\in (\psi_1,1-\psi_2)}Pr~(A^k_{2,\be,u})
\label{eq:upperbound1}
\end{align}
The probability 
$Pr~(A^k_{2,\be,u_k})$ must be calculated with respect to $\be\sim N_k(\bzero,\bI_k)$, and
independently, $\bx-\bx^*\sim N_k(\bzero,m^{-1}_k\bI_k)$. 
Thus, we have
\begin{equation}
\frac{(2\pi)^{k/2}}{M^k_k}\pi(\mathbf x^*)\inf_{u\in (\psi_1,1-\psi_2)}Pr~(A^k_{1,\be,u})
\leq AR^{(RWMH)}
\leq \frac{(2\pi)^{k/2}}{m^k_k}\pi(\mathbf x^*)\sup_{u\in (\psi_1,1-\psi_2)}Pr~(A^k_{2,\be,u}).
\label{eq:lower_upper1}
\end{equation}

We first focus on the lower bound in (\ref{eq:lower_upper1}).
As $k\rightarrow\infty$,
\\[2mm]
$-\frac{(M_k-m_k)}{2}(\mathbf{x}-\mathbf{x^*})^T (\mathbf{x}-\mathbf{x^*})
-M_k(\mathbf{x}-\mathbf{x^*})^T\boldsymbol\epsilon - \frac{M_k}{2}\boldsymbol\epsilon^T\boldsymbol\epsilon$
\begin{eqnarray}
&\sim &
AN\left(-\frac{k}{2}\left[\left(\frac{M_k-m_k}{M_k}\right)+M_k\right],
\frac{k}{2}\left[\left(\frac{M_k-m_k}{M_k}\right)^2+2M_k+M^2_k\right]\right),
\label{eq:an1}
\end{eqnarray}
where $AN(\mu,\sigma^2)$ denotes 
asymptotic normal with mean $\mu$ and variance $\sigma^2$. From (\ref{eq:an1}) it follows that
\begin{eqnarray}
\inf_{u\in (\psi_1,1-\psi_2)}Pr~(A^k_{1,\be,u}) &\asymp &  
1-\sup_{u\in (\psi_1,1-\psi_2)}
\Phi\left(\frac{\log u+\frac{k}{2}\left[\left(\frac{M_k-m_k}{M_k}\right)+M_k\right]}
{\sqrt{\frac{k}{2}\left[\left(\frac{M_k-m_k}{M_k}\right)^2+2M_k+M^2_k\right]}}\right)\nonumber\\
&=& 1- \Phi\left(\frac{\log (1-\psi_2)+\frac{k}{2}\left[\left(\frac{M_k-m_k}{M_k}\right)+M_k\right]}
{\sqrt{\frac{k}{2}\left[\left(\frac{M_k-m_k}{M_k}\right)^2+2M_k+M^2_k\right]}}\right).
\label{eq:infimum1}
\end{eqnarray}
Combining (\ref{eq:lowerbound1}) and (\ref{eq:infimum1}) we obtain
\begin{equation}
AR^{(RWMH)}\geq
\frac{(2\pi)^{k/2}}{M^k_k}\pi(\mathbf x^*)
\left\{1-\Phi\left(\frac{\log (1-\psi_2)+\frac{k}{2}\left[\left(\frac{M_k-m_k}{M_k}\right)+M_k\right]}
{\sqrt{\frac{k}{2}\left[\left(\frac{M_k-m_k}{M_k}\right)^2+2M_k+M^2_k\right]}}\right)
\right\}.
\label{eq:AR_RWMH_lowerbound1}
\end{equation}

Now focusing our attention on the upper bound of $AR^{(RWMH)}$ we similarly obtain
\begin{eqnarray}
AR^{(RWMH)}
&\leq &\frac{(2\pi)^{k/2}}{m^k_k}\pi(\mathbf x^*)
\left\{1-
\Phi\left(\frac{\log \psi_1-\frac{k}{2}\left[\left(\frac{M_k-m_k}{m_k}\right)-m_k\right]}
{\sqrt{\frac{k}{2}\left[\left(\frac{M_k-m_k}{m_k}\right)^2+2m_k+m^2_k\right]}}\right)\right\}.\nonumber\\
\label{eq:AR_RWMH_upperbound1}
\end{eqnarray}
In other words,
\begin{align}\label{eq:AR_RWMH_bounds}
\frac{(2\pi)^{k/2}}{M^k_k}\pi(\mathbf x^*)
&\left\{1-\Phi\left(\frac{\log (1-\psi_2)+\frac{k}{2}\left[\left(\frac{M_k-m_k}{M_k}\right)+M_k\right]}
{\sqrt{\frac{k}{2}\left[\left(\frac{M_k-m_k}{M_k}\right)^2+2M_k+M^2_k\right]}}\right)
\right\}
\leq  AR^{(RWMH)}\notag\\
&\leq \frac{(2\pi)^{k/2}}{m^k_k}\pi(\mathbf x^*)
\left\{1-\Phi\left(\frac{\log \psi_1-\frac{k}{2}\left[\left(\frac{M_k-m_k}{m_k}\right)-m_k\right]}
{\sqrt{\frac{k}{2}\left[\left(\frac{M_k-m_k}{m_k}\right)^2+2m_k+m^2_k\right]}}\right)\right\}.\nonumber\\
\end{align}
Since $m_k\asymp M_k$, it is easy to see that 
\begin{align}\label{eq:AR_RWMH_bounds2}
&\frac{\log (1-\psi_2)+\frac{k}{2}\left[\left(\frac{M_k-m_k}{M_k}\right)+M_k\right]}
{\sqrt{\frac{k}{2}\left[\left(\frac{M_k-m_k}{M_k}\right)^2+2M_k+M^2_k\right]}}
\asymp  \sqrt{\frac{k}{2}},\quad\mbox{and}\notag\\
&\frac{\log \psi_1-\frac{k}{2}\left[\left(\frac{M_k-m_k}{m_k}\right)-m_k\right]}
{\sqrt{\frac{k}{2}\left[\left(\frac{M_k-m_k}{m_k}\right)^2+2m_k+m^2_k\right]}}
\asymp  \sqrt{\frac{k}{2}}.\notag
\end{align}
Hence, it follows that
\begin{equation}
AR^{(RWMH)}\asymp
\frac{(2\pi)^{k/2}}{M^k_k}\left\{1-\Phi\left(\sqrt{\frac{k}{2}}\right)\right\}.
\label{eq:RWMH_asymp}
\end{equation}

\subsubsection{Asymptotic bounds of the acceptance rate for additive TMCMC}
\label{subsubsec:asymp_tmcmc}
Next let us obtain lower and upper bounds of $AR^{(TMCMC)}$ associated with TMCMC with additive transformation. 
Recall that in this case, $\mathbf{x}' = \mathbf{x} + \epsilon \boldsymbol\delta$ 
where $\epsilon \sim N(0,1)\mathbb{I}(\epsilon > 0)$ and the components $\delta_i$ 
of $\boldsymbol\delta$ are $iid$ taking values $\pm1$ with probability $1/2$ each. 
In this set up (\ref{eqn:new_bound2}) becomes 
\begin{equation}\label{eqn:new_bound3}
 \begin{split}
  & Pr~\left(\frac{(M_k-m_k)}{2}(\mathbf{x}-\mathbf{x^*})^T (\mathbf{x}-\mathbf{x^*}) 
  -m_k\e(\mathbf{x}-\mathbf{x^*})^T\bdelta - \frac{m_k}{2}k\e^2
  > \log r\right) \\
 & \geq Pr~(R(\mathbf{x}'| \mathbf{x}) > r) \\
 & \geq 
   Pr~\left(-\frac{(M_k-m_k)}{2}(\mathbf{x}-\mathbf{x^*})^T (\mathbf{x}-\mathbf{x^*}) 
  -M_k\e(\mathbf{x}-\mathbf{x^*})^T\bdelta - \frac{M_k}{2}k\e^2
  > \log r\right),
 \end{split}
\end{equation}
Now notice that, under the lower bound of $\pi(\bx)$ provided in (\ref{eq:eq2}), as $k\rightarrow\infty$, 
$$\frac{M_k(\mathbf{x}-\mathbf{x^*})^T (\mathbf{x}-\mathbf{x^*})}{k}
\stackrel{\a. s.}{\longrightarrow} 1,$$ and
$$\frac{\sqrt{M_k}(\mathbf{x}-\mathbf{x^*})^T\bdelta}{k} \stackrel{\a. s.}{\longrightarrow}0.$$
Similarly, under the upper bound of $\pi(\bx)$ in (\ref{eq:eq2}), the above hold with $M_k$ replaced with $m_k$.
From these it follow that the asymptotic forms of the 
lower and the upper bounds of (\ref{eqn:new_bound3}) are given by
\[
Pr~\left(-\frac{(M_k-m_k)}{2}(\mathbf{x}-\mathbf{x^*})^T (\mathbf{x}-\mathbf{x^*}) 
  -M_k\e(\mathbf{x}-\mathbf{x^*})^T\bdelta - \frac{M_k}{2}k\e^2
  > \log r\right)
\]
\[ \asymp
\frac{(2\pi)^{k/2}}{M^k_k}\pi(\bx^*) 
\left\{2\Phi\left(\sqrt{-\frac{2}{kM_k}\log r-\left(\frac{M_k-m_k}{M^2_k}\right)}\right)-1\right\}
\]
and
\[
Pr~\left(\frac{(M_k-m_k)}{2}(\mathbf{x}-\mathbf{x^*})^T (\mathbf{x}-\mathbf{x^*}) 
  -m_k\e(\mathbf{x}-\mathbf{x^*})^T\bdelta - \frac{m_k}{2}k\e^2
  > \log r\right)
\]
\[ \asymp
\frac{(2\pi)^{k/2}}{m^k_k}\pi(\bx^*)
\left\{2\Phi\left(\sqrt{-\frac{2}{km_k}\log r+\left(\frac{M_k-m_k}{m^2_k}\right)}\right)-1\right\}. 
\]
Using the above results, it follows as in the case of $AR^{(RWMH)}$ that
\begin{align} 
&\frac{(2\pi)^{k/2}}{M^k_k}\pi(\bx^*)
\left\{2\inf_{u\in (\psi_1,1-\psi_2)}\Phi\left(\sqrt{-\frac{2}{kM_k}\log u-\left(\frac{M_k-m_k}{M^2_k}\right)}\right)-1\right\}\notag\\
&\leq AR^{(TMCMC)}
\leq \frac{(2\pi)^{k/2}}{M^k_k}\pi(\bx^*) 
\left\{2\sup_{u\in (\psi_1,1-\psi_2)}\Phi\left(\sqrt{-\frac{2}{km_k}\log u-\left(\frac{m_k-M_k}{m^2_k}\right)}\right)-1\right\}.\notag
\end{align}
Substituting the infimum and supremum over $(\psi_1,1-\psi_2)$ we obtain 
\begin{align} 
&\frac{(2\pi)^{k/2}}{M^k_k}\pi(\bx^*)
\left\{2\Phi\left(\sqrt{-\frac{2}{kM_k}\log (1-\psi_2)-\left(\frac{M_k-m_k}{M^2_k}\right)}\right)-1\right\}\notag\\
&\leq AR^{(TMCMC)}
\leq \frac{(2\pi)^{k/2}}{m^k_k}\pi(\bx^*)  
\left\{2\Phi\left(\sqrt{-\frac{2}{km_k}\log \psi_1-\left(\frac{m_k-M_k}{m^2_k}\right)}\right)-1\right\}.\notag
\end{align}
Since $k\left\vert\frac{M_k}{m_k}-1\right\vert\rightarrow 0$ and $m_k\asymp M_k$, it follows that
\begin{align} 
& -\frac{2}{kM_k}\log (1-\psi_2)-\left(\frac{M_k-m_k}{M^2_k}\right)
\asymp 
-\frac{2}{kM_k}\log (1-\psi_2)\quad\mbox{and}\notag\\
& -\frac{2}{km_k}\log \psi_1-\left(\frac{m_k-M_k}{m^2_k}\right)
\asymp 
-\frac{2}{km_k}\log \psi_1\asymp -\frac{2}{kM_k}\log \psi_1.\notag
\end{align}
Hence,  
\begin{eqnarray}
\frac{(2\pi)^{k/2}}{M^k_k}\pi(\bx^*)
\left\{2\Phi\left(\sqrt{-\frac{2}{kM_k}\log (1-\psi_2)}\right)-1\right\}
&\leq & AR^{(TMCMC)}\nonumber\\
&\leq & \frac{(2\pi)^{k/2}}{M^k_k}\pi(\bx^*)  
\left\{2\Phi\left(\sqrt{-\frac{2}{kM_k}\log \psi_1}\right)-1\right\}.\nonumber\\
\label{eq:AR_TMCMC_bounds}
\end{eqnarray}

For comparing (\ref{eq:AR_TMCMC_bounds}) with (\ref{eq:RWMH_asymp}) 
where $M_k=O\left(k^t\right);t>2$, it can be easily verified using L'Hospital's rule that
for any $\zeta_1>0$, $\zeta_2>0$,
\begin{equation}
\frac{2\Phi\left(\frac{\zeta_1}{\sqrt{kM_k}}\right)-1}{1-\Phi\left(\zeta_2\sqrt{k}\right)}
\rightarrow \infty.
\label{eq:comparison1}
\end{equation}
The above result will continue to hold if instead of $M_k=O\left(k^t\right);t>2$, 
$M_k\rightarrow a$, where $a>0$ is some constant.
Hence, $AR^{(TMCMC)}$
converges to zero at a much slower rate compared to $AR^{(RWMH)}$.

\section{Comparison of TMCMC with HMC}
\label{sec:hmc}

Motivated by Hamiltonian dynamics, \ctn{Duane87} introduced HMC, an MCMC algorithm with 
 deterministic proposals based on approximations of the Hamiltonian equations. 
 We will show that this algorithm is a special case of TMCMC, but first we provide a brief
 overview of HMC. More details can be found in \ctn{Liu01}, \ctn{Cheung09} and the references therein.

\subsection{Overview of HMC}
\label{subsec:hmc_overview}

If $\pi(\bx)$ is the target distribution, a fictitious dynamical system may be considered, where
$\bx(t)\in\mathbb R^d$ can be thought of as the $d$-dimensional position vector of a body of particles
at time $t$. If $\bv(t)=\dot{\bx}(t)=\frac{d{\bx}}{dt}$ is the speed vector of the
particles, $\dot{\bv}(t)=\frac{d{\bv}}{dt}$ is its acceleration vector, and 
$\Vec{F}$ is the force exerted on the particle; then, by Newton's law of motion 
$\Vec{F}=\boldm\dot{\bv}(t)=(m_1\dot{v_1},\ldots,m_d\dot{v_d})(t)$, where $\boldm\in\mathbb R^d$
is a mass vector. The momentum vector, $\bp=\boldm\bv$, often used in classical mechanics,
can be thought of as a vector of auxiliary variables brought in to facilitate simulation from $\pi(\bx)$.
The kinetic energy of the system is defined as $W(\bp)=\bp'\bM^{-1}\bp$, $\bM$ being the mass matrix.
Usually, $\bM$ is taken as $\bM=diag\{m_1,\ldots,m_d\}$.

The target density $\pi(\bx)$ is linked to the dynamical system via the potential energy field
of the system, defined as $U(\bx)=-\log\pi(\bx)$. The total energy (Hamiltonian function), is given by
$H(\bx,\bp)=U(\bx)+W(\bp)$.
A joint distribution over the phase-space $(\bx,\bp)$ is then considered, given by
\begin{equation}
f(\bx,\bp)\propto\exp\left\{-H(\bx,\bp)\right\}=\pi(\bx)\exp\left(-\bp'\bM^{-1}\bp/2\right)
\label{eq:phase_space_dist}
\end{equation}
Since the marginal density of $f(\bx,\bp)$ is $\pi(\bx)$, it now remains to provide a joint proposal
mechanism for simulating $(\bx,\bp)$ jointly; ignoring $\bp$ yields $\bx$ marginally from $\pi(\cdot)$. 

For the joint proposal mechanism, HMC makes use of Newton's law of motion, derived from
the law of conservation of energy, and often written in the form of Hamiltonian equations, given by 
\begin{eqnarray}
\dot{\bx}(t)&=&\frac{\partial H(\bx,\bp)}{\partial\bp}=\bM^{-1}\bp,\nonumber\\
\dot{\bp}(t)&=&-\frac{\partial H(\bx,\bp)}{\partial\bx}=-\nabla U(\bx),\nonumber
\end{eqnarray}
where $\nabla U(\bx)=\frac{\partial U(\bx)}{\partial\bx}$.
The Hamiltonian equations can be approximated by the commonly used leap-frog algorithm (\ctn{Hockney70}), 
given by,
\begin{align}
\bx(t+\d t)&=\bx(t)+\d t\bM^{-1}\left\{\bp(t)-\frac{\d t}{2}\nabla U\left(\bx(t)\right)\right\}\label{eq:frog1}
\\[1ex]
\bp(t+\d t)&=\bp(t)-\frac{\d t}{2}\left\{\nabla U\left(\bx(t)\right)+\nabla U\left(\bx(t+\d t)\right)\right\}\label{eq:frog2}
\end{align}
Given choices of $\bM$, $\d t$, and $L$, the HMC is then given by the following algorthm:
\begin{algo}\label{algo:hmc} \topline HMC \botline \normalfont \ttfamily
\begin{itemize}
 \item Initialise $\bx$ and draw $\bp\sim N(\bzero, \bM)$.
 \item Assuming the current state to be $(\bx,\bp)$, do the following:
\begin{enumerate}
 \item Generate $\bp'\sim N\left(\bzero,\bM\right)$;
 \item Letting $(\bx(0),\bp(0))=(\bx,\bp')$, run the leap-frog algorithm for 
 $L$ time steps, to yield $(\bx'',\bp'')=\left(\bx(t+L\d t),\bp(t+L\d t)\right)$;
\item Accept $(\bx'',\bp'')$ with probability 
\begin{equation}
\min\left\{1,\exp\left\{-H(\bx'',\bp'')+H(\bx,\bp')\right\}\right\},
\label{eq:hmc_accept}
\end{equation}
and accept $(\bx,\bp')$ with the remaining probability.
\end{enumerate}
\end{itemize}
\botline \rmfamily
\end{algo}
In the above algorithm, it is not required to store simulations of $\bp$.
Next we show that HMC is a special case of TMCMC.

\subsection{HMC is a special case of TMCMC}
\label{subsec:hmcmc_special_tmcmc}

To see that HMC is a special case of TMCMC, note that the leap-frog step of the HMC algorithm (Algorithm \ref{algo:hmc})
is actually a deterministic transformation of the form $g^L: (\bx(0),\bp(0))\to(\bx(L),\bp(L))$ (see \ctn{Liu01}). 
This transformation satisfies the following: if $(\bx',\bp')=g^L(\bx,\bp)$, then $(\bx,-\bp)=g^L(\bx',-\bp')$.

The Jacobian
of this transformation is 1 because of the volume preservation property, which says that if
$V(0)$ is a subset of the phase space, and if $V(t)=\left\{(\bx(t),\bp(t)): (\bx(0),\bp(0))\in V(0)\right\}$, then the volume
$|V(t)|={\int\int}_{V(t)}d\bx d\bp={\int\int}_{V(0)}d\bx d\bp=|V(0)|$. As a result, the Jacobian does not feature in the
HMC acceptance probability (\ref{eq:hmc_accept}).

For any dimension, there is only one move type defined for HMC,
which is the forward transformation $g^L$. Hence, this move type has probability one of selection, and all
other move types which we defined in general terms in connection with TMCMC, have zero probability of selection. 
As a result, 
the corresponding TMCMC acceptance ratio needs 
slight modification---it must be made free of the move-type probabilities, 
which is exactly the case in (\ref{eq:hmc_accept}).

The momentum vector
$\bp$ can be likened to $\be$ of TMCMC, but note that $\bp$ must always be of the same dimensionality as $\bx$;
this is of course, permitted by TMCMC as a special case. 

\subsection{Comparison of acceptance rate for $L=1$ with RWMH and TMCMC}
\label{subsec:hmc_vs_tmcmc}

For $L=1$, the proposal corresponding to HMC is given by (see \ctn{Cheung09})
\begin{equation}
q(\bx'\mid\bx(t))=N\left(\bx':\bmu(t),\bSigma(t)\right),
\label{eq:hmc_proposal}
\end{equation}
where (\ref{eq:hmc_proposal}) is a normal distribution with mean and variance
given, respectively, by the following:
\begin{align}
\bmu(t)&=\bx(t)+\frac{1}{2}\bM^{-1}\d t\nabla\log\left(\pi(\bx(t))\right)\label{eq:hmc_mean}
\\[1ex]
\bSigma(t)&=\d t\bM^{-1}\label{eq:hmc_var}
\end{align}
Assuming diagonal $\bM$ with $m_i$ being the $i$-th diagonal element,
the proposal can be re-written in the following more convenient manner:
for $i=1,\ldots,k$,
\begin{equation}
x'_i=x_i(t)+\e_i,
\label{eq:hmc_proposal2}
\end{equation}
where $s_i(t)$ denotes the $i$-th component of $\nabla\log\left(\pi(\bx(t))\right)$,
and $\e_i\sim N\left(\frac{1}{2}\frac{\d t s_i(t)}{m_i},\frac{\d t}{m_i}\right)$.
Assuming, as is usual, that $m_i=1$ for each $i$, it follows that
\begin{equation}
\frac{\parallel\bx'-\bx\parallel^2}{\d t^2} =\sum_{i=1}^k\left(\frac{\e_i}{\d t}\right)^2=\sum_{i=1}^k\e'^2_i
\sim\chi^2_k(\lambda),
\label{eq:hmc_rate1}
\end{equation}
where $\chi^2_k(\lambda)$ is a non-central $\chi^2$ distribution with $k$ degrees of freedom
and non-centrality parameter 
$\lambda =\frac{\d t^2}{4}\sum_{i=1}^ks^2_i(t)$.
Since, as either $k\rightarrow\infty$ or $\lambda\rightarrow\infty$,
\begin{equation}
\frac{\chi^2_k(\lambda)-(k+\lambda)}{\sqrt{2(k+2\lambda)}}\stackrel{\mathcal L}{\rightarrow} N(0,1),
\label{eq:clt}
\end{equation}
assuming the same strong log-concavity conditions on the target density $\pi$ 
as provided in Section \ref{subsec:lower_upper_bound} it follows as in (\ref{eq:AR_RWMH_bounds}) that,
\begin{align}\label{eq:AR_HMC_bounds}
\frac{(2\pi)^{k/2}}{M^k_k}\pi(\mathbf x^*)
&\left\{1-\Phi\left(\frac{\log (1-\psi_2)+\frac{k}{2}\left[\left(\frac{M_k-m_k}{M_k}\right)+M_k\d t^2\left(1+\frac{\lambda}{k}\right)\right]}
{\sqrt{\frac{k}{2}\left[\left(\frac{M_k-m_k}{M_k}\right)^2+2M_k\d t^2\left(1+\frac{\lambda}{k}\right)
+M^2_k\d t^4\left(1+\frac{2\lambda}{k}\right)\right]}}\right)
\right\}
\leq  AR^{(HMC)}\notag\\
&\leq \frac{(2\pi)^{k/2}}{m^k_k}\pi(\mathbf x^*)
\left\{1-\Phi\left(\frac{\log \psi_1-\frac{k}{2}\left[\left(\frac{M_k-m_k}{m_k}\right)-m_k\d t^2\left(1+\frac{\lambda}{k}\right)\right]}
{\sqrt{\frac{k}{2}\left[\left(\frac{M_k-m_k}{m_k}\right)^2+2m_k\d t^2\left(1+\frac{\lambda}{k}\right)
+m^2_k\d t^4\left(1+\frac{2\lambda}{k}\right)\right]}}\right)\right\}\nonumber\\
\end{align}
If $\lambda/k\rightarrow 0$ as $k\rightarrow\infty$, it follows as in Section \ref{subsubsec:asymp_rwmh} that
\begin{equation}
AR^{(HMC)}\asymp
\frac{(2\pi)^{k/2}}{M^k_k}\left\{1-\Phi\left(\sqrt{\frac{k}{2}}\right)\right\},
\label{eq:HMC_asymp}
\end{equation}
which is of the same asymptotic form as (\ref{eq:RWMH_asymp}), corresponding to the RWMH acceptance rate.
On the other hand, if $\lambda/k\rightarrow \infty$ as $k\rightarrow\infty$, then it follows that
\begin{equation}
AR^{(HMC)}\asymp
\frac{(2\pi)^{k/2}}{M^k_k}\left\{1-\Phi\left(\frac{\sqrt{\frac{k}{2}\left(1+\frac{\lambda}{k}\right)}}
{\sqrt{2}\sqrt{\frac{1}{M_k\d t^2}+1}}
\right)\right\},
\label{eq:HMC_asymp2}
\end{equation}
which clearly tends to zero at a much faster rate compared to (\ref{eq:HMC_asymp}).

To summarize, if $\lambda/k\rightarrow 0$ as $k\rightarrow\infty$, then both HMC and RWMH have the
same asymptotic acceptance rate, tending to zero much faster than that of additive TMCMC. On the other hand,
if $\lambda/k\rightarrow\infty$ as $k\rightarrow\infty$, the acceptance rate of HMC tends to zero much faster
than that of RWMH, while that of additive TMCMC maintains its slowest convergence rate to zero. 
Also observe that the above conclusions will continue to hold if $m_k$ and $M_k$ tend to finite positive
constants satisfying $M_k\asymp m_k$ and $k\left\vert\frac{m_k}{M_k}-1\right\vert\rightarrow 0$ as 
$k\rightarrow\infty$.


\section{Generalized Gibbs/Metropolis approaches and comparisons with TMCMC}
\label{sec:liu}


It is important to make it clear at the outset of this discussion that the goals of 
TMCMC and generalized Gibbs/Metropolis
methods are different, even though both use moves based on transformations. 
While the strength of the latter lies in improving mixing of the standard Gibbs/MH algorithms
by adding transformation-based steps to the underlying collection of usual Gibbs/MH steps, 
TMCMC is an altogether general method
of simulating from the target distribution which does not require any underlying step of Gibbs or
MH.  

The generalized Gibbs/MH methods work in the following manner.
Suppose that an underlying Gibbs or MH algorithm for exploring a target distribution
has poor mixing properties. Then in order to improve mixing, one may consider some suitable transformation
of the random variables being updated such that mixing is improved under the transformation.
Such a transformation needs to chosen carefully since it is important to ensure that 
invariance of the Markov chain is preserved under the transformation.
It is convenient to begin with an overview of the generalized Gibbs method
with a sequential updating scheme and then proceed to the discussion on the 
issues and the importance of the block updating idea in the context of improving
mixing of standard Gibbs/MH methods.

\ctn{Liu00} (see also \ctn{Liu99}) propose simulation of a transformation 
from some appropriate probability distribution, and then applying the transformation
to the co-ordinate to be updated. For example, in a $d$-dimensional
target distribution, for updating $\bx=(x_1,x_2,\ldots,x_d)$
to $\bx'=(x'_1,x_2,\ldots,x_d)$, using an additive transformation, 
one can select $\e$ from some appropriate distribution
and set $x'_1=x_1+\e$. Similarly, if a scale transformation is desired, then 
one can set $x'_1=\gamma x_1$, where $\gamma$
must be sampled from some suitable distribution. The suitable distributions of $\e$ and
$\gamma$ are chosen such that the target distribution is invariant with respect to
the move $\bx'$, the forms of which are provided in \ctn{Liu00}. 
For instance, if $\pi(\cdot)$ denotes the target distribution, then for the additive
transformation, $\e$ may be sampled from $\pi(x_1+\e,x_2,\ldots,x_d)$, and
for the multiplicative transformation, one may sample $\gamma$ 
from $|\gamma|\pi(\gamma x_1,x_2,\ldots,x_d)$.
Since direct sampling from such distributions may be impossible, \ctn{Liu00} suggest
a Metropolis-type move with respect to a transformation-invariant transition kernel.

Thus, in the generalized Gibbs method, sequentially all the variables must be updated,
unlike TMCMC, where all the variables can be updated simultaneously in a single block.
Here we note that for irreducibility issues the generalized Gibbs approach is not suitable for
updating the variables blockwise using some transformation that acts on all the variables in a given block. 
To consider a simple example, with say, $d=2$ and a single block consisting of both the variables, if
one considers the additive transformation, then starting with $\bx=(x_1,x_2)$, where $x_1<x_2$,
one can not ever reach $\bx'=(x'_1,x'_2)$, where $x'_1>0, x'_2<0$. This is because $x'_1=x_1+z$
and $x'_2=x_2+z$, for some $z$, and $x'_1>0,x'_2<0$ implies $z>-x_1$ and $z<-x_2$, which is a contradiction.
The scale transformation implies the move 
$\bx=(x_1,\ldots,x_d)\rightarrow (\gamma x_1,\ldots,\gamma x_d)=\bx'$. If
one initializes the Markov chain with all components positive,
for instance, then in every iteration, all the variables will have the same sign.
The spaces where some variables are positive and some negative will never be visited,
even if those spaces have positive (in fact, high) probabilities under the target distribution.
This shows that the Markov chain is not irreducible.
In fact, with the aforementioned approach, no transformation,
whatever distribution they are generated from, can guarantee irreducibility in general if 
blockwise updates using the transformation strategy of generalized Gibbs is used. 

Although blockwise transformations are proposed in \ctn{Liu00} (see also \ctn{Kou05} who propose
a MH-based rule for blockwise transformation), they are meant for a different purpose than that
discussed above. The strength of such blockwise transformations lies in improving the mixing behaviour
of standard Gibbs or MH algorithms.
Suppose that an underlying Gibbs or MH algorithm for exploring a target distribution
has poor mixing properties. Then in order to improve mixing, one may consider some suitable transformation
of the set of random variables being updated such that mixing is improved under the transformation.
This additional step involving transformation of the block of random variables can be obtained
by selecting a transformation from the appropriate probability distribution provided in \ctn{Liu00}.
This ``appropriate" probability distribution guarantees that stationarity of the transformed block 
of random variables is preserved. Examples reported in \ctn{Liu00}, \ctn{Muller05}, \ctn{Kou05}, etc. 
demonstrate that this 
transformation also improves the mixing behaviour of the
chain, as desired. 

Thus, to improve mixing using the methods of \ctn{Liu00} or \ctn{Kou05} one needs to run the usual Gibbs/MH 
steps, with an additional step involving transformations as discussed above. This additional step
induces more computational burden compared to the standard Gibbs/MH steps, but improved mixing
may compensate for the extra computational labour. In very high dimensions, of course, this
need not be a convenient approach since computational complexity usually makes standard Gibbs/MH approaches
infeasible. Since the additional transformation-based step works on the samples generated by
standard Gibbs/MH, impracticality of the latter implies that the extra transformation-based step of \ctn{Liu00}
for improving mixing is of little value in such cases.

It is important to point out that the generalized Gibbs/MH methods can be usefully employed by
even TMCMC to further improve its mixing properties. In other words, a step of generalized
Gibbs/MH can be added to the computational fast TMCMC. This additional step can significantly
improve the mixing properties of TMCMC. That TMCMC is much faster computationally than
standard Gibbs/MH methods imply that even in very high-dimensional situations the generalized
Gibbs/MH step can ve very much successful while working in conjunction with TMCMC.


\begin{table}[ht]
\begin{small}
 \begin{center}
\begin{tabular}{c|c|c||c|c|c} \hline
Flight no. &  Failure & Temp & Flight no. &  Failure & Temp\\ \hline
 14 & 1 & 53 & 2  & 1 &  70 \\
9  & 1 &  57 & 11  & 1 &  70 \\
23 &  1  & 58 & 6  & 0 &  72 \\
10 &  1 &  63  & 7  & 0 &  73 \\
1 &  0 &  66 & 16  & 0 &  75 \\
5 &  0 &  67 & 21  & 1 &  75 \\
13 &  0 &  67 & 19  & 0 &  76 \\
15  & 0 &  67 & 22  & 0 &  76 \\
4  & 0 &  68 & 12  & 0 &  78 \\
3  & 0 &  69 & 20  & 0 &  79 \\
8  & 0 &  70 & 18  & 0 &  81 \\
17  & 0 &  70 & & & \\ \hline
\end{tabular}
\end{center}
\caption{Challenger data. Temperature at flight time (degrees F) and failure of 
O-rings (1 stands for failure, 0 for success).}
\label{tab:challenger}
\end{small}
\end{table}

\section{Examples of TMCMC for discrete state spaces}
\label{sec:discrete_tmcmc}
The ideas developed in this paper are not confined to continuous target distributions, but also
to discrete cases. For the sake of illustration, we consider two examples below.
\begin{itemize}
\item[(i)] Consider an Ising model, where, for 
$i=1,\ldots,k$ $(k\geq 1)$, the discrete random variable $x_i$ takes the value $+1$ or $-1$
with positive probabilities. We then have
$\statesp = \{-1,1\}$. To implement TMCMC, consider the 
forward transformation 
$T(x_i,\e) = sgn(x_i+\e)$ with probability $p_i$, and choose the backward transformation as  
$T^b(x_i,\e) = sgn(x_i-\e)$ with probability $1-p_i$. Here $sgn(a)=\pm 1$ accordingly as $a>0$ or $a<0$, and $\Y = (1,\infty)$.
Note the difference with the continuous cases. Here even though neither of the transformations is 1-to-1
or onto, TMCMC works because of discreteness; the algorithm can easily be seen to satisfy detailed balance,
irreducibility and aperiodicity.
However, if $k=1$ with $x_1$ being the only variable, then, if $x_1=1$, it is possible to choose, 
with probability one, the backward move-type, yielding
$T^b(x_1,\e)=-1$. On the other hand, if $x_1=-1$, with probability one, we can choose the forward
move-type, yielding $T(x_1,\e)=1$. 
Only $2^k$ move-types are necessary
for the $k$-dimensional case for one-step irreducibility.
In discrete cases,
however, there will be no Jacobian of transformation, thereby simplifying the acceptance ratio.

\item[(ii)] For discrete state spaces like $\mathbb Z^k$, ($\mathbb Z = \{0,\pm1,\pm2,\ldots\}$) the additive transformation with single epsilon does not work. For example, with $ k =2$, if the starting state is $(1,2)$ then the chain will never reach any states $(x,y)$ where $x$ and $y$ have same parity (i.e. both even or both odd) resulting a reducible Markov chain. Thus in this case we need to have more move-types than $2^k$. For example, with some positive probability (say $r$) we may select a random coordinate and update it leaving other states unchanged. With the remaining probability (i.e. $1-r$) we may do the analogous version of the additive transformation:\\
 Let $\Y=[1,\infty)$. Then, can choose the forward transformation for each coordinate as
$T_i(x_i,\e)=x_i+[\e]$ and the backward transformation as $T^b_i(x_i,\e)=x_i-[\e]$, where $[a]$ denotes
the largest integer not exceeding $a$.

This chain is clearly ergodic and we still need only \emph{one} epsilon to update the states. 

\end{itemize}
However, in discrete cases, TMCMC reduces to Metropolis-Hastings with a mixture proposal.
But it is important to note that the implementation is much efficient and computationally cheap when TMCMC-based
methodologies developed in this paper, are used.